\def\BibTeX{{\rm B\kern-.05em{\sc i\kern-.025em b}\kern-.08em
    T\kern-.1667em\lower.7ex\hbox{E}\kern-.125emX}}
\newcolumntype{C}[1]{>{\centering\arraybackslash$}p{#1}<{$}}
\newcommand{\Rmnum}[1]{\expandafter\@slowromancap\romannumeral #1@}
\newtheorem{theorem}{Theorem}
\newtheorem{proposition}[theorem]{Proposition}
\newtheorem{remark}{Remark}
\newtheorem{lemma}[remark]{Lemma}
\newcommand{\multiline}[1]{%
  \begin{tabularx}{\dimexpr\linewidth-\ALG@thistlm}[t]{@{}X@{}}
    #1
  \end{tabularx}
}
\newcommand{\algmargin}{\the\ALG@thistlm}   
\algnewcommand{\parState}[1]{\State%
    \parbox[t]{\dimexpr\linewidth-\algmargin}{\strut #1\strut}}
\begin{document}

%\title{Edge-Assisted Content Caching and Delivery for Mobile VR Video Streaming}
\title{User Dynamics-Aware Edge Caching and Computing for Mobile Virtual Reality}
%caching adaptive dynamic 
% Digital Twin-Driven Network Intelligence for Computing Resource Management in IoV?
\author{Mushu~Li,~\IEEEmembership{Member,~IEEE,}
Jie~Gao,~\IEEEmembership{Senior~Member,~IEEE,}
Conghao~Zhou,~\IEEEmembership{Member,~IEEE,}
Xuemin~(Sherman)~Shen,~\IEEEmembership{Fellow,~IEEE,} 
and~Weihua~Zhuang,~\IEEEmembership{Fellow,~IEEE} 
\thanks{
Mushu Li is with the Department of Electrical, Computer and Biomedical Engineering, Toronto Metropolitan University, Toronto, Canada M5B 2K3 (email: mushu.li@ieee.org).

Jie Gao is with the School of Information Technology, Carleton University, Ottawa, ON, Canada, K1S 5B6 (email: jie.gao6@carleton.ca).

Conghao Zhou, Xuemin (Sherman) Shen, and Weihua Zhuang are with the Department of Electrical and Computer Engineering, University of Waterloo, Waterloo, ON, Canada, N2L 3G1 (email: \{c89zhou, sshen,  wzhuang\}@uwaterloo.ca).    

}
}%

\maketitle

\begin{abstract}

In this paper, we present a novel content caching and delivery approach for mobile virtual reality (VR) video streaming. The proposed approach aims to maximize VR video streaming performance, i.e., minimizing video frame missing rate, by proactively caching popular VR video chunks and adaptively scheduling computing resources at an edge server based on user and network dynamics. First, we design a scalable content placement scheme for deciding which video chunks to cache at the edge server based on tradeoffs between computing and caching resource consumption. Second, we propose a machine learning-assisted VR video delivery scheme, which allocates computing resources at the edge server to satisfy video delivery requests from multiple VR headsets. A Whittle index-based method is adopted to reduce the video frame missing rate by identifying network and user dynamics with low signaling overhead. Simulation results demonstrate that the proposed approach can significantly improve VR video streaming performance over conventional caching and computing resource scheduling strategies.
\end{abstract}

\begin{IEEEkeywords}
Virtual reality, deep reinforcement learning, caching, content delivery, resource scheduling.
\end{IEEEkeywords}

\section{Introduction}
As a promising use case in sixth-generation (6G) communication networks, mobile virtual reality (VR) is anticipated to reshape the way people study, work, and entertain via the digital transformation of the physical world~\cite{9830046}. A myriad of novel applications can be enabled by mobile VR, such as immersive gaming, telesurgery, metaverses, etc.~\cite{review}. With mobile VR, users can watch 360\textdegree ~stereoscopic videos through VR headsets and access the digital world with a fully immersive experience. 

Delivering VR videos to mobile VR headsets in real time can be challenging. VR videos are characterized by their ultra-high resolutions (up to 11520 $\times$ 6480 pixels for panoramic videos), while conventional videos typically have a resolution of 4K or less~\cite{Pantelis}. From a communication perspective, VR video streaming consumes considerable bandwidth on both backhaul and wireless links for delivering high-resolution videos to VR headsets. {In addition, VR headsets play 360\textdegree ~stereoscopic videos, i.e., two different videos for the left and right eyes, respectively, creating the perception of depth that people experience in the real world. To reduce storage resource usage, monoscopic videos, which consist of flat two-dimensional equirectangular video frames, can be stored at the cloud server and projected into stereoscopic videos in content delivery~\cite{xie2016deep3d}.} Extensive VR video processing requires powerful computing capabilities, and computing capabilities in VR headsets are usually insufficient for low-latency video processing~\cite{Boos}.

The aforementioned challenges in VR video delivery can be addressed using two main approaches: tile-based content delivery and mobile edge computing. First, in tile-based content delivery, VR videos are divided spatio-temparlly into small video chunks (VCs).  Only the VCs that cover the user's current viewing area, i.e., the field of view (FoV), are sent to and played on the VR headset~\cite{Dai}. {Such a tile-based solution can significantly reduce the data sizes of videos to be downloaded from the cloud server while satisfying users' requests.} Second, mobile edge computing (MEC), as an innovation in the fifth-generation (5G) networks, leverages storage and computing resources at access points near VR headsets to reduce content delivery delay. Edge servers are located at access points to cache popular VCs and {project monoscopic VCs (MVCs) into stereoscopic VCs (SVCs)} for resource-limited VR headsets~\cite{Liu}. While MEC and tile-based content delivery make VR video streaming possible, the following questions need to be answered:
\begin{itemize}
    \item  Given limited storage space, which VCs should be cached at an edge server?
    \item How should computing resources at the edge server be scheduled to meet video delivery requests from multiple VR headsets?
\end{itemize}

Addressing the first question necessities the effective selection of VCs cached at an edge server, referred to as content placement. Determining the optimal caching solution can be complicated ~\cite{gao2020design}. {The edge server can cache MVCs, which have smaller data sizes than the corresponding SVCs covering the same spatial area, to reduce caching resource usage given limited cache size. Alternatively, the edge server can cache SVCs, which are processed VCs and ready to send to VR headsets, to reduce video processing time in content delivery~\cite{8728029}.
%When a video in the same time segment is divided spatially, multiple tiled videos are to be rendered and generate  as potential caching options. When rendering the same tiled video, VCs. Specifically, either the original monoscopic VC (MVC) or the processed stereoscopic VC (SVC) can be cached for reduced content size or low content delivery time, respectively . VCs can also be differentiated based on their quality. 
Furthermore, spatio-temporally partitioning a short video can produce a large number of VCs. These VCs may be spatially overlapped and stitched together to produce the video covering a user's FoV~\cite{Dai}. The vast number of potential caching options and coupling among VCs complicate content placement at the edge server. }

Regarding the second question, managing limited computing resources for satisfying multiple content delivery requests is non-trivial due to the dynamics in VR video streaming. Millimeter wave (mmWave) communications are widely adopted to enable high-speed VR content delivery in many research works~\cite{yang2022feeling}. However, physical obstacles in the propagation environment, including the user's body, can disrupt communication links, resulting in a time-varying transmission rate. In addition, proactive content delivery can facilitate smooth VR video streaming by requesting and downloading VCs in advance based on predicted user viewing trajectories~\cite{8918038}. {The spatial area in a video displayed to a user changes over time as the user's head orientation changes, and the viewing trajectories can be different for different videos depending on the content (e.g., sports or movies) and for different users due to the user viewing preference. The dynamic of user viewing trajectories} introduces additional uncertainty in computing resource management at the edge server \cite{li2022adaptive}. Allocating computing and communication resources for satisfying VR content delivery requests based on inaccurate viewing trajectory predictions may degrade resource utilization and result in a poor VR video streaming experience for all nearby VR headsets sharing the resources.

In this paper, we study content caching and delivery for mobile VR video streaming. The research objective is to optimize VR video streaming performance, i.e., reducing the frame missing rate, for multiple VR users given limited communication, caching, and computing resources. First, we propose a content placement scheme to determine the optimal caching solution, given a content delivery delay threshold, channel quality and processing rate statistics, and VC data sizes. The proposed heuristic scheme uses a distributed optimization technique, i.e., alternating direction method of multipliers (ADMM), to facilitate parallel content placement. Using the caching solution, we propose a machine learning-based scheduling scheme to allocate limited computing resources at the edge server to reduce the frame missing rate in satisfying content delivery requests generated by multiple VR headsets. Finally, we study adaptive video quality adjustment to maximize the cached video quality at the edge server. This study covers the main aspects of VR video streaming, and our contributions are summarized as follows:

\begin{itemize}
    \item We develop a caching placement and computing resource management approach to effectively improve video streaming performance in mobile VR video streaming.
\item We propose a content placement scheme for caching VCs by trading off caching resource utilization and content delivery delay. The proposed scheme is proven to achieve a near-optimal caching solution. 
    %Parallel content placement is facilitated to improve the algorithm efficiency for the proposed content placement scheme.
    \item We propose a deep reinforcement learning (DRL)-based content delivery scheduling scheme for computing resource management. The scheme minimizes the frame missing rate in video streaming by adapting to network and user dynamics in a distributed manner.
    %\item We propose an adaptive video quality adjustment scheme that balances viewpoint adaptivity and video quality iteratively to improve VR content delivery performance.

\end{itemize}

The remainder of this paper is organized as follows. Section II introduces related works. In Sections III and IV, we present the system model and problem formulation, respectively. In Sections V and VI, we propose content caching and content delivery scheduling schemes, respectively. We then discuss adaptive video quality adjustment in Section VI. Simulation results are presented in Section VII. Section VIII concludes this study.

\section{Related Works}

{MEC can reduce content delivery and processing delay in VR video streaming by providing storage and computing capabilities near mobile VR headsets. An edge server can cache both MVCs and SVCs. 
%Caching SVCs eliminates the delay caused by projecting MVCs to SVCs in satisfying content delivery requests.
A caching solution was evaluated  in~\cite{8728029} to minimize average content delivery delay based on the tradeoffs among communication, computing, and caching resource usage. A similar problem was investigated in~\cite{8713498} considering caching capabilities at VR headsets. A mixed-integer problem was formulated and solved to determine whether and where to cache MVCs and SVCs.
{Additionally, some works take into account the correlations among VCs in order to reduce resource usage in caching and delivering videos covering users' FoVs, and the correlations vary depending on the granularity of partitioning a VR video.} When a VR video is spatially partitioned into small segments, multiple VCs need to be stitched together to generate the video covering a user's FoV. In such a case, a tile-based content placement solution was proposed in~\cite{9350227}.  A Choquet integral method was applied to identify the popularity of individual VCs based on user viewing behavior statistics, and popular VCs were cached at the edge server.
%In this work, the video is partitioned spatial-temporally into tiled videos, and multiple tiled videos are stitched and projected to stereoscopic videos at the VR headset. 
The estimation of VC popularity was also studied in~\cite{9343267} and~\cite{papaioannou2019tile} for caching popular and high-quality VCs, while minimizing content delivery delay.
When a video is coarsely partitioned in the spatial dimension, a single VC can cover a user's FoV, whereas multiple VCs may cover overlapped spatial areas.
A study of content placement in such a case was presented in~\cite{Dai}, where a group of cached VCs are synthesized to generate VCs that are not cached yet. The objective was to minimize content delivery delay, while maximizing caching resource utilization by properly selecting VCs to cache. 
From the existing works, VR content placement is a complex problem with different types of VCs (different quality levels, monoscopic or stereoscopic, etc.), and a scalable content placement solution is necessary for effectively caching VCs given a large number of potential caching options. }

In addition to content placement, real-time content delivery scheduling and computing resource management can further improve video streaming performance for maximizing computing resource utilization at the edge server. To prevent users from feeling spatially disoriented and dizzy in VR video streaming, the time difference between the user's movement and corresponding display at the VR headset, referred to as motion-to-photon (MTP) delay, should be less than 20 ms~\cite{oculus}. %Very high transmission and computing rates are required to respond to requests for VCs associated with the current viewpoint location~\cite{8329628, Mangiante}. f
In order to meet this strict requirement, a VR headset can predict user viewpoints based on historical viewpoint trajectories and request the VCs proactively~\cite{10.1145/3210240.3210323, Nasrabadi}. As a result, the VR video streaming experience depends on viewpoint prediction accuracy. In~\cite{9268977}, the viewpoint prediction window was determined to maximize users' VR experience, and computing resource allocation was jointly determined. Furthermore, machine learning algorithms were adopted in~\cite{8955928} and~\cite{9411714} to determine computing resource scheduling policies, given user viewpoint trajectories. In spite of the research efforts, designing a lightweight scheduling scheme remains an open issue when network dynamics and viewpoint prediction uncertainty are taken into account~\cite{li2022adaptive}.

\section{System Model}
\begin{figure}[t]
		\centering
	  	\includegraphics[width=0.5 \textwidth]{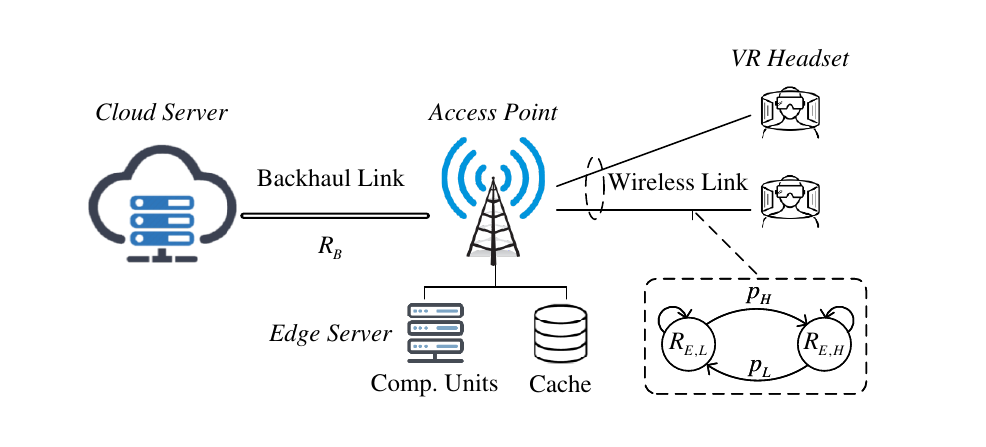}
	  	\caption{Network model.}\label{fig:1}
\end{figure}

The considered VR video caching and delivery scenario is shown in Fig.~\ref{fig:1}. An edge server, located at the access point, connects and serves $U$ mobile VR users. 
{Each user has a VR headset that plays 360\textdegree~3D stereoscopic VR videos at $\hat{f}$ frames per second (fps). We refer to the time interval between the playback of two adjacent frames as a time slot, and the length of a time slot is $\delta = 1/\hat{f}$.} %The number of VR headsets in the considered network is denoted by $U$. %In this paper, we focus on downlink content delivery for VR applications. 
%VR headsets play 360\textdegree~3D stereoscopic VR videos. 
In each time slot, each VR headset predicts the location of future user viewpoints, generates a content delivery request for the VC to be played, and sends the request to the edge server.  Video processing, including stitching multiple VCs and projecting 2D monoscopic videos to 3D stereoscopic videos, can consume extensive computing resources and is conducted at the edge server.
%We refer to the $k$-th time interval for VR headsets sending requests as time slot $k$, and the time slot is synchronized among VR headsets.

The edge server is equipped with $E$ computing units, and each computing unit can {conduct video processing for} only one content delivery request at any time. Moreover, the edge server has a cache with size $C$ for storing popular VCs. For retrieving VCs that are not cached, the edge server connects to the cloud server via a wired link with data rate $R_B$. When transmitting the requested VC from the edge server to the corresponding VR headset, a mmWave band is used with high-speed rate $R_H$, and a sub-6 GHz band acts as a backup with low-speed rate $R_L$ if the high-speed mmWave band is in outage~\cite{Liu}. We use a two-stage Markov chain to model the data rate between the edge server and a VR headset. The probabilities of the high-speed state transiting from and to the low-speed state are denoted by $p_L$ and $p_H$, respectively. As a result, the average data rate from the edge server to a VR headset is $\bar{R}_E = (R_L +R_H)/{(p_L+p_H)}$. %We will then discuss the types of VCs, how they are delivered, and the content delivery delay model. 
\subsection{VR Video Model}
%{Monoscopic VCs (MVCs) are stored at the cloud server, and they can be converted to stereoscopic VCs (SVCs) at an edge server facilitated by auxiliary files (e.g., depth map) to satisfy VR headset content delivery requests~\cite{fehn2004depth,9536410}. As a result, both MVCs and SVCs can be cached at an edge server. }

{Tile-based content caching and delivery are considered. %VR videos can be divided into $I$ spatial tiles and $J$ temporal segments. 
As shown in Fig.~\ref{fig:3}, each VR video has two dimensions. In the spatial dimension, a video can be divided into $I$ tiles, and we denote the number of tiles needed to cover an FoV by $\hat{I}$. In the temporal dimension, videos can be divided into $J$ time segments {with a constant duration in seconds\footnote{{As time slots are measured in milliseconds (e.g., for a 60 fps video, a time slot lasts 17 milliseconds), a time segment contains hundreds of time slots.}}~\cite{hooft2019tile}.} Accordingly, VR videos are divided into VCs along both dimensions. In the temporal dimension, the playback time of a VC is the duration of a time segment. }

In the spatial dimension, there are two types of VCs: MVCs and SVCs. An MVC contains the monographic content of a spatial tile. For MVCs, layered encoding is adopted to differentiate the video quality~\cite{Pantelis}. There are two types of MVCs: layer-0 MVCs, which have a low resolution to satisfy the minimum video quality requirement, and layer-1 MVCs, which can be added on top of layer-0 MVCs for quality improvement. We denote the sets of layer-0 and layer-1 MVCs for all VR videos by $\mathcal{T}_0$ and $\mathcal{T}_1$, respectively. Tuple $t = (i,j,y)$ represents the MVC corresponding to the $i$-th tile, $j$-th time segment, and $y$-th layer, where $i\in \{1,\dots,I\}$, $j\in \{1,\dots,J\}$, and $y\in \{1,0\}$. 

There are $X$ MVCs, including both layer-0 and layer-1 MVCs, to be stitched and projected to an SVC, and the projection is facilitated by using auxiliary files (e.g., depth map). %An SVC contains stereoscopic graphic content (for both two eyes) of multiple spatial tiles.  
SVCs are delivered by the edge server to mobile VR headsets. To obtain the SVC centering on tile $i$, $X_{0}$ layer-0 MVCs and $X-X_{0}$ layer-1 MVCs, corresponding to the tiles closest to tile $i$, are stitched and projected. In the spatial dimension, the area covered by an SVC depends on the parameter $X_0$, which must be at least $\hat{I}$ to ensure that an SVC can fully cover the area of an FoV. The video quality of an SVC can be enhanced by encoding a large portion of layer-1 MVCs~\cite{9939105, 8926488}. On the other hand, a VR headset can play the videos covering multiple FoVs when an SVC is encoded with a large portion of layer-0 MVCs (corresponding to a large $X_0$), enabling the VR headset to adapt to frequent viewpoint changes~\cite{shi2019mobile}.   % For simplicity, we assume that the data sizes of different MVCs are similar, and the data size of an SVC does not depend on the parameter $X_0$.}
  
Denote the set of all SVCs by $\mathcal{T}_S$, and denote the sets of layer-0 and layer-1 MVCs for generating SVC $f$ by $\mathcal{B}(f, X_0)$ and $\mathcal{E}(f, X_0)$, respectively, where $\mathcal{B}(f, X_0) \subseteq \mathcal{T}_0$ and $\mathcal{E}(f, X_0) \subseteq \mathcal{T}_1$. Since an SVC contains videos for both eyes, the overall data size of all MVCs used to generate an SVC is smaller than the data size of the SVC. Therefore, caching SVCs consumes more cache spaces at the edge server but decreases video processing delay when delivering SVCs to VR headsets. In the considered scenario, the cloud server stores all MVCs, and the edge server can cache some MVCs and SVCs to reduce content delivery delay. The sets of MVCs and SVCs cached at the edge server are denoted by $\mathcal{M}$ and $\mathcal{S}$, respectively.

%The edge server provides caching capability to store the monoscopic tiled videos, as well as stitched and projected stereoscopic videos to reduce video processing delay in content delivery. A tuple $f = (v,j)$ can represent a stereoscopic video and in $j$-th time segment of the whole VR video.  Since $X_0\geq \hat{I}$, multiple viewpoints can be rendered by a stereoscopic video $f$, in which  and the corresponding time segment  
\begin{figure}[t]
		\centering
	  	\includegraphics[width=0.5 \textwidth]{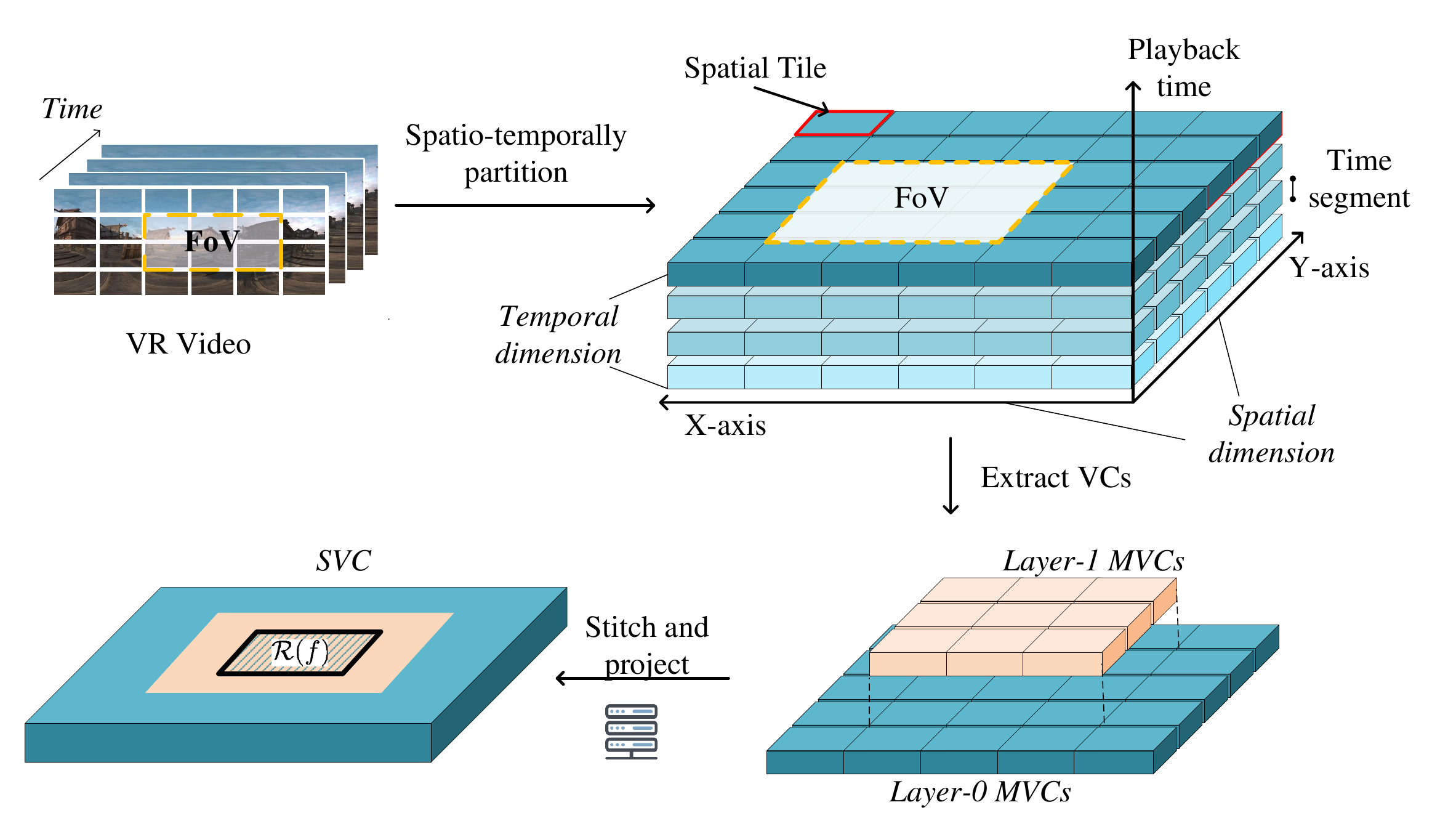}
	  	\caption{VR video model.}\label{fig:3}
\end{figure}

\subsection{Content Delivery Model}\label{model1}

The VR content delivery procedure is illustrated in Fig.~\ref{fig:2}. Each VR headset has a local buffer to store the downloaded SVCs. In each time slot, a VR headset tracks the user viewpoint and {plays} the SVC covering the user's FoV. Denote the viewpoint of user $u$ in time slot $k$ by $v_{u,k} = (i_{u,k}, j_{u,k})$, where $i_{u,k}$ and $j_{u,k}$ represent {the spatial tile at the center of user $u$'s FoV in time slot $k$ and the time segment of the video played by the VR headset in time slot $k$, respectively}. When an SVC covers a larger area than an FoV, i.e., $X_0>\hat{I}$, multiple viewpoints can be rendered by an SVC. Denote the set of viewpoints that can be rendered by SVC $f$ by $\mathcal{R}_f$. In time slot $k$, if SVC $f$ in the buffer of VR headset $u$ can render the current viewpoint, i.e., $v_{u,k} \in \mathcal{R}_f$, the SVC is fetched and played by the VR headset. %The set of viewpoints that can be rendered by the SVCs of VR headset $u$'s buffer in time slot $k$ is denoted by $\mathcal{U}_{u,k}$.  
Otherwise, if no SVC in the buffer can render the viewpoint, the VR headset fails to play the correct video frame, and an frame missing event occurs. {In addition, to identify which SVC will be requested proactively, in each time slot, each VR headset consecutively predicts user viewpoints to be rendered after the current slot, until it finds the next future viewpoint that cannot be rendered by any SVCs in the current local buffer.} Such a predicted viewpoint is referred to as the desired viewpoint. 
Then, the VR headset generates a VR content delivery request proactively for obtaining the SVC to render the desired viewpoint. Denote the predicted viewpoint of time slot $k$ for VR headset $u$ by $\hat{v}_{u,k}$. The content delivery request from VR headset $u$ in time slot $k$ is denoted by $d_{u,k} = \hat{v}_{u,k'}|_{k'>k}$, where $k'$ is the time slot to render desired viewpoint $\hat{v}_{u,k'}$. {Since the historical information of the past viewpoint trajectories is updated after each time slot, a VR headset conducts predictions for future viewpoints in each time slot. Only the prediction result obtained in the current time slot is used to generate the content delivery request.}

The edge server receives content delivery requests from multiple VR headsets in each time slot. A scheduler at the edge server identifies which request to respond first when a computing unit is available. If a computing unit is allocated to a request, i.e., the request is scheduled, there are two possible scenarios for fetching and processing the VCs to satisfy the request:
\begin{enumerate}
    \item Case 1: The edge server caches an SVC that can satisfy the request, i.e., $\{\exists f| f\in \mathcal{S}, d_{u,k} \in \mathcal{R}_f\}$. In this case, the SVC is fetched from the cache and delivered to the corresponding VR headset\footnote{In Case 1, a computing unit is still needed to process the SVC to satisfy the video requirements of the corresponding VR headset~\cite{8926340}, although the processing delay is negligible.}. 
    \item Case 2: No SVC cached on the edge server can satisfy the request. In this case, the edge server fetches MVCs in $\mathcal{B}(f,X_0)$ and $\mathcal{E}(f, X_0)$ from the cache or the cloud server, processes the MVCs for generating SVC $f$, and delivers SVC $f$ to the corresponding VR headset.
\end{enumerate}
%When the VR video associated with the selected request is delivered, the corresponding computing unit becomes available to process video for the next scheduled request. 
Regardless of whether the request from a VR headset is scheduled or not, the VR headset tracks and predicts the viewpoints in each time slot. If the viewpoint prediction result changes before the request is scheduled, the VR headset sends an updated content delivery request to the edge server, and the previously unscheduled request becomes obsolete and discarded. 

\begin{figure}[t]
		\centering
	  	\includegraphics[width=0.5\textwidth]{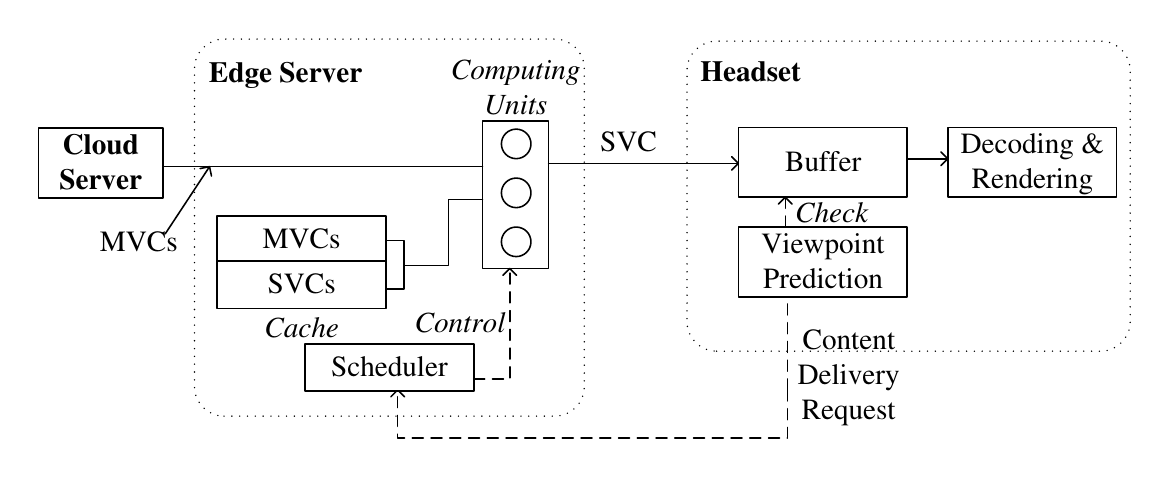}
	  	\caption{Content delivery procedure.}\label{fig:2}
\end{figure}

%We assume that exclusive bandwidth and computing units at the edge server are reserved to resolve rebuffering events, and additional costs would be generated to the on-demand content delivery. 
\subsection{Delay Model}
A content delivery delay is the duration from the instant that a request is scheduled to the instant that the SVC rendering the desired viewpoint is downloaded to the VR headset. The delay includes three parts: the transmission delay for downloading MVCs that are not cached from the cloud server to the edge server, the computing delay for stitching and processing the MVCs, and the transmission delay for downloading the SVC from the edge server to the VR headset. When the request from VR headset $u$ in time slot $k$, i.e., request $d_{u,k}$, is scheduled, a computing unit is allocated to process the request and obtain the SVC. 
%The edge server then checks whether an SVC satisfying the request is cached, i.e., Case 1 in Subsection~\ref{model1}. 
The delay for transmitting the SVC from the edge server to the VR headset is 
\begin{equation}
    T^{\text{E}}(d_{u,k}; R) = {w_{f}^\text{S}}{R}^{-1},  d_{u,k} \in \mathcal{R}_f
\end{equation}
where the value of $R$ can be either $R_{L}$ or $R_{H}$. Parameter $w_{f}^\text{S}$ represents the data size of SVC $f$, which can be approximated by the overall data size of MVCs in set $\mathcal{C}(f) = \mathcal{B}(f,X_0)\cup\mathcal{E}(f,X_0)$. Let $w_t^\text{M}$ denote the data size of MVC $t$. Then,
\begin{equation}
    w_{f}^\text{S} = \sum_{t\in \mathcal{C}(f)}\alpha w_t^\text{M}
\end{equation}
where parameter $\alpha$ represents the ratio of data size change from MVCs  to the corresponding SVC. If no cached SVC can satisfy the request, i.e., Case 2 in Subsection~\ref{model1}, the edge server needs to generate SVC $f$ from MVCs. Firstly, the MVCs not in the cache are downloaded from the cloud server. The transmission delay is 
\begin{equation}
    T^{\text{B}}(d_{u,k}) = \sum_{t \in \{\mathcal{C}(f)\backslash \{\mathcal{C}(f) \cap \mathcal{M}\}\}} w_{t}^\text{M} R_B^{-1},  d_{u,k} \in \mathcal{R}_f.
\end{equation}
When all MVCs in set $\mathcal{C}(f)$ are available at the edge server, the MVCs are stitched and projected to SVC $f$. 
%The computing delay includes two parts: the delay for stitching the tiled videos together and the delay for converting monoscopic tiled videos into stereoscopic videos. 
The corresponding computing delay is
\begin{equation}
    T^{\text{C}}(d_{u,k}) = \sum_{t \in \mathcal{C}({f})} w_{t}^\text{M} \chi^{-1},   d_{u,k} \in \mathcal{R}_f
\end{equation}
where parameter $\chi$ denotes the time to process each data bit in MVC stitching and monoscopic-to-stereoscopic projection. 
%Set $\mathcal{F}({v})$ represents the tiled videos that are in set $\mathcal{C}({v})$ and not stored in stereoscopic format at the edge server, which can be defined by $\mathcal{C}({v})/\{\mathcal{C}({v}) \cap \mathcal{B}_e^S\}$. 
Therefore, the overall delay of content delivery for request $d_{u,k}$ is
\begin{equation}
    T(d_{u,k}) = T^{\text{B}}(d_{u,k})+T^{\text{C}}(d_{u,k})+T^{\text{E}}(d_{u,k}, R).
\end{equation}

%Given the current time slot $k$, if $T(d_{u,k}) > (k'-k)\times \delta$, the SVC cannot be delivered in time, and the VR headset can render only a part of its FoV or cannot render the FoV~\cite{shi2019mobile}, i.e., frame missing happens. %The mismatch is referred to as a frame missing. 

\iffalse
rebuffering and outdated video. Denote the long-term popularity that viewpoint $v$ in segment $j$ is requested by a user VR headset by $p(v,j)$, which can be obtained in advance from the historical VR playback profiles~\cite{11}. Given the sets of the cached tiled videos and stereoscopic videos $\{\mathcal{M}_B, \mathcal{M}_E, \mathcal{S}\}$, the average content delivery time can be obtained by
\begin{equation}
    \bar{T}(\mathcal{M}_B, \mathcal{M}_E, \mathcal{S}) = \sum_{(v,j)\in \mathcal{V}} p(v,j) \{T^{\text{B}}(v)+T^{\text{C}}(v)+ T^{\text{E}}(d; \bar{R}_E)\}
\end{equation}
where set $\mathcal{V}$ is the set of all possible viewpoints among different time segments, and  . Moreover, 
\fi

\section{Problem Formulation}
Our main objective is to minimize the frame-missing rate when playing VR videos on multiple VR headsets. Achieving the objective requires that the edge server efficiently caches VCs to reduce the average content delivery delay and dynamically allocate computing resources to individual content delivery requests in real time. Therefore, we formulate two optimization problems for caching content placement and content delivery scheduling, respectively.
%The caching and the content delivery policies jointly determine the VR video playback performance. 
%Therefore, we first determine a caching solution given a content delivery delay requirement. Then, we develop a scheduling scheme to minimize the overall probability of frame missing. Finally, a video quality adjustment scheme is presented for improving video quality given a frame missing rate threshold.

\subsection{Content Placement}

%Given the fixed number of MVCs to satisfy a request, i.e., $X$, and the portion of layer-0 MVCs, i.e., $X_0$, the MVCs associated to a request can be obtained. 
The objective of content placement is to maximize the probability that content delivery delay $T(d_{u,k})$ is lower than the delay requirement $H$, i.e., $P(T(d_{u,k})<H)$, by selecting proper VCs to cache. The corresponding problem can be formulated as:
\begin{subequations}
\label{obj.caching}
\begin{align}
 \max_{\{\mathcal{M}, \mathcal{S}\}} &&& \sum_{d\in \mathcal{V}} p(d) P(T(d) < H) \label{obj.cachinga}\\
 \text{s.t.} &&& \sum_{t \in \{\mathcal{M} \}} w_t^\text{M} + \sum_{f \in \mathcal{S}} w_f^\text{S} \leq C ,\\
&&& \mathcal{M}\subseteq \{\mathcal{T}_0 \cup \mathcal{T}_1\}, \mathcal{S}\subseteq \mathcal{T}_S
\end{align}
\end{subequations}
where set $\mathcal{V}$ denotes all possible viewpoints. In \eqref{obj.cachinga}, $p(d)$ represents the probability that request $d$ is generated by a VR headset. Such probability can be obtained from historical viewpoint popularity profiles, which reflect the average fraction of time that a user viewpoint falls onto a specific tile in a specific time segment~\cite{wu2017dataset}. The problem is a non-linear combinatorial optimization problem since multiple viewpoints may be rendered by an SVC.  It is similar to the maximal coverage problem~\cite{khuller1999budgeted}, which is proved to be NP-hard. Furthermore, the number of VCs is large, which makes the problem more complex. We propose a two-stage solution for solving the content caching problem. We first present a heuristic algorithm to solve~\eqref{obj.caching} given a fixed cache size. Then, we adopt an iterative optimization method to enable parallel and scalable content placement.
\subsection{Content Delivery}
Due to limited computing resources, the scheduler at the edge server should allocate the computing resource to satisfy content delivery requests that can reduce the frame missing rate as much as possible. 
%However, the content delivery dynamics complicate scheduling decisions. Satisfying requests based on inaccurate viewpoint prediction may waste computing and communication resources since the requested VCs may not be played by the VR headsets. Additionally, channel conditions and request generation dynamics also affect scheduling decisions.
We refer to the probability that an SVC in the buffer matches the current viewpoint as the hit probability. A higher hit probability means a lower frame missing rate at a VR headset. 
%The objective of content delivery scheduling for content delivery is to maximize the hit probability given the caching solution and limited computing resources. 
Define scheduling variable $a_{u,k}$, where $a_{u,k}=1$ if the request from VR headset $u$ is scheduled in time slot $k$, and $a_{u,k} = 0$ otherwise. 
\iffalse
Let function $f(v_{u,k};\mathcal{U}_{u,k})$ represent the VR content freshness for rendering viewpoint $v_{u,k}$. To model the content freshness degradation overtime due to auxiliary file update, we use exponential functions of {the time gap between the current time slot and the time slot that the auxiliary file is updated in this paper as an example}, but the proposed scheme is not limited to any specific type of functions. The function $f(v_{u,k};\mathcal{U}_{u,k})$ is defined by
\begin{equation}
    f(v_{u,k};\mathcal{U}_{u,k}) = \begin{cases}
  \exp\{-\eta [k - g(v_{u,k})]\},  & \text{if } g(v_{u,k})>g'(v_{u,k}) \\
 1, &  \text{otherwise}
\end{cases}
\end{equation}
where the non-negative parameter $\eta$ represents the rate for freshness degradation of the VR content. The value of the function, referred to as the freshness score, decreases with the time gap between the current time slot and the time slot of the auxiliary file update. 
\fi
The scheduling variables are determined to maximize the hit probability, i.e., minimize the time-averaged number of events where SVCs in the local buffers of VR headsets can render the viewpoints of the corresponding users. The objective function is
\begin{align}
     \max_{a_{u,k}, \forall u,k} & \lim_{K\rightarrow \infty} \frac{1}{K} \mathbb{E} \Big[\sum_{u = 1}^{U} \sum_{k=1}^{K} \textbf{1} [v_{u,k} \in \mathcal{U}_{u,k}(\mathbf{a}_{u,k-1})] \Big] \label{eq.problem}
\end{align}
where $\mathcal{U}_{u,k}(\mathbf{a}_{u,k-1})$ denotes the set of viewpoints that can be rendered by SVCs in VR headset $u$'s buffer in time slot $k$. Such a set depends on the scheduling decisions $\mathbf{a}_{u,k-1} = [a_{u,1}, \dots, a_{u,k-1}]$. Function $\textbf{1} [x]$ is equal to one if $x$ is true and zero otherwise.
Due to the limited number of computing units, we formulate a stochastic constraint:
\begin{align}
\lim_{K\rightarrow \infty} \frac{1}{K}  \mathbb{E} \Big[\sum_{u = 1}^{U} \sum_{k=1}^{K} a_{u,k}\Big] \leq \frac{E\times \delta}{\bar{T}} \label{eq.constraint}
\end{align}
where $\bar{T}$ represents the average content delivery delay, i.e., $\bar{T} = \sum_{d\in \mathcal{V}}p(d)T(d)$. The average number of time slots for delivering an SVC is $\bar{T}/\delta$. The constraint in~\eqref{eq.constraint} is a relaxed constraint imposing that, on average, at most ${(E\times \delta)}/{\bar{T}}$ computing units can be scheduled in a time slot. %As a result,  no more than $E$ content delivery requests can be simultaneously processed at the edge server on average. 

Several factors affect the scheduling decision. Generally, the edge server should first schedule the request that needs to be satisfied urgently. However, if the channel quality is poor or receiving the SVC from the edge server takes a long time, the VR headset may not receive the requested SVC on time. In addition, high viewpoint movement dynamics pose a challenge to accurate viewpoint prediction. As a result, downloaded SVCs may not be played, resulting in a waste of computing and communication resources. %In contrast, when the user viewpoint moves slowly, the requested VCs can support the viewpoint for multiple slots with high-confidence prediction. 
Both network and user viewpoint movement dynamics should be taken into account in computing resource management. %Further,

\section{Mobile VR Content Caching}\label{sec.caching}

In this section, we first propose a content placement scheme to solve Problem~\eqref{obj.caching} given {the predefined video quality setting $X_0$}, while taking into account the trade-offs among multiple resources. Then, we extend the scheme for caching a large number of VCs, and solve Problem~\eqref{obj.caching} in a parallel manner.
\subsection{Content Placement Scheme}
We first identify the properties of content placement for different types of VCs. Let $\mathcal{F} = \mathcal{M} \cup \mathcal{S}$.
Let function $L(\mathcal{F})$ represent the overall probability of delay requirement satisfaction as follows:
\begin{equation}
L(\mathcal{F}) =  \sum_{d \in \mathcal{R}(\mathcal{F})}p(d) P(T(d) < H) \label{eq.set}
\end{equation}
where $\mathcal{R}(\mathcal{F})$ represents the set of viewpoints that can be rendered by the VCs in set $\mathcal{F}$, and $\mathcal{R}(\mathcal{F})= \{\cup_{f\in \mathcal{S}}\mathcal{R}_f\} \cup  \{\cup_{f\in \{{f|\mathcal{C}(f)\subseteq \mathcal{M}}\}}\mathcal{R}_f\}$.
The following lemmas can be obtained.
\begin{lemma}\label{le.1}
If the edge server caches only SVCs, i.e., $\mathcal{F} = \mathcal{S}$, the real-valued set function $L(\mathcal{F})$ is a non-decreasing and submodular function, i.e.,
\begin{equation}
    L(\mathcal{S}\cup \{f\}) -   L(\mathcal{S}) \geq  L(\mathcal{S}'\cup \{f\}) - L(\mathcal{S}')
\end{equation}
where  $\mathcal{S} \subseteq \mathcal{S}'\subseteq \mathcal{T}_S$ and $f \in \mathcal{T}_S\backslash \mathcal{S}'$. 
\end{lemma}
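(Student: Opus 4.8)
The plan is to recognize $L(\mathcal{F})$, restricted to the case $\mathcal{F}=\mathcal{S}\subseteq\mathcal{T}_S$, as a weighted set-coverage function over the ground set of viewpoints, for which monotonicity and submodularity are standard. First I would specialize the definition of $\mathcal{R}(\mathcal{F})$: when only SVCs are cached we have $\mathcal{M}=\emptyset$, and since $\mathcal{C}(f)$ is always a nonempty set of MVCs, no index $f$ satisfies $\mathcal{C}(f)\subseteq\mathcal{M}$, so the second union vanishes and $\mathcal{R}(\mathcal{S})=\bigcup_{f\in\mathcal{S}}\mathcal{R}_f$. Hence
\begin{equation}
L(\mathcal{S})=\sum_{d\in\bigcup_{f\in\mathcal{S}}\mathcal{R}_f} c(d),\qquad c(d):=p(d)\,P(T(d)<H).
\end{equation}
Each weight $c(d)$ is a product of two probabilities and is therefore non-negative; crucially, in Case~1 the delivery of a cached SVC incurs only the transmission delay $T^{\mathrm{E}}$, so $P(T(d)<H)$ can be attached to the viewpoint $d$ rather than to the entire cache configuration.

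Next I would establish monotonicity directly from the coverage structure: adding an SVC $f$ to $\mathcal{S}$ can only enlarge the rendered set, $\mathcal{R}(\mathcal{S})\subseteq\mathcal{R}(\mathcal{S}\cup\{f\})$, and since every $c(d)\ge 0$ no term of the sum can decrease, giving $L(\mathcal{S}\cup\{f\})\ge L(\mathcal{S})$. For submodularity I would write each marginal gain as a sum over the viewpoints that $f$ covers for the first time,
\begin{equation}
L(\mathcal{S}\cup\{f\})-L(\mathcal{S})=\sum_{d\in\mathcal{R}_f\setminus\mathcal{R}(\mathcal{S})}c(d),
\end{equation}
and analogously with $\mathcal{S}'$ in place of $\mathcal{S}$. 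Because $\mathcal{S}\subseteq\mathcal{S}'$ implies $\mathcal{R}(\mathcal{S})\subseteq\mathcal{R}(\mathcal{S}')$, the index set shrinks, $\mathcal{R}_f\setminus\mathcal{R}(\mathcal{S})\supseteq\mathcal{R}_f\setminus\mathcal{R}(\mathcal{S}')$; non-negativity of the weights then forces the first marginal sum to dominate the second, which is precisely the claimed inequality.

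The step I expect to require the most care is justifying that $c(d)$ may be treated as a fixed per-viewpoint weight. A single viewpoint $d$ may lie in $\mathcal{R}_f$ for several SVCs $f$ with differing data sizes $w_f^{\mathrm{S}}$ and hence differing transmission delays $T_f(d)$; the server would then serve $d$ through whichever cached SVC minimizes the delay, so the true contribution of $d$ is $\max_{f\in\mathcal{S}:\,d\in\mathcal{R}_f} p(d)P(T_f(d)<H)$ rather than a constant. I would handle this by noting that $L(\mathcal{S})=\sum_d \max_{f\in\mathcal{S}}r_f(d)$, where $r_f(d):=p(d)P(T_f(d)<H)\ge 0$ for $d\in\mathcal{R}_f$ and $r_f(d):=0$ otherwise. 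The marginal gain of adding $f$ at viewpoint $d$ is then $\bigl(r_f(d)-\max_{f'\in\mathcal{S}}r_{f'}(d)\bigr)^{+}$, which is non-increasing in $\mathcal{S}$, so each pointwise term is monotone submodular; since a non-negative sum of monotone submodular functions is monotone submodular, the conclusion is unchanged. If instead the paper's convention fixes $P(T(d)<H)$ per viewpoint, the plain coverage argument of the preceding paragraph already suffices.
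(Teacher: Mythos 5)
Your proof is correct and takes essentially the same route as the paper's Appendix A: both express the marginal gain of adding an SVC $f$ as the weight of the viewpoints in $\mathcal{R}_f$ that are newly covered, and observe that this set shrinks as the cached set grows (the paper's $\mathcal{X}_1, \mathcal{X}_2$, though written with an intersection, are evidently intended as exactly this set difference). Your closing refinement treating the contribution of $d$ as $\max_{f} r_f(d)$ is a valid and more careful variant, but since the paper's definition in~\eqref{eq.set} attaches $P(T(d)<H)$ to the viewpoint $d$ itself, the plain weighted-coverage argument already coincides with the paper's proof.
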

\begin{proof}
See Appendix A.
\end{proof}

\begin{lemma} \label{le.2}
If the edge server caches only MVCs, i.e., $\mathcal{F} = \mathcal{M}$, the real-valued set function $L(\mathcal{F} )$ is a non-decreasing and supermodular function, i.e.,
\begin{equation}
    L(\mathcal{M}\cup \{t\}) -   L(\mathcal{M}) \leq  L(\mathcal{M}'\cup \{t\}) - L(\mathcal{M}')
\end{equation}
where  $\mathcal{M} \subseteq \mathcal{M}'\subseteq \{\mathcal{T}_0 \cup \mathcal{T}_1\}$ and $t \in \{\mathcal{T}_0 \cup \mathcal{T}_1\}\backslash \mathcal{M}'$. 
\end{lemma}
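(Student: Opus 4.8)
The plan is to recognize that, when the edge server caches only MVCs, $L$ is a weighted coverage function with a \emph{conjunctive} activation rule, and to exploit the resulting increasing-returns structure. First I would rewrite $L$ over the viewpoint ground set. For each viewpoint $d$ set the nonnegative weight $g(d) = p(d)\,P(T(d)<H)$, and let $c_d(\mathcal{M}) = \mathbf{1}[d\in\mathcal{R}(\mathcal{M})]$ be the indicator that $d$ is served. Since the excerpt gives $\mathcal{R}(\mathcal{M}) = \bigcup_{f:\,\mathcal{C}(f)\subseteq\mathcal{M}}\mathcal{R}_f$ in the MVC-only case, a viewpoint $d$ is served precisely when \emph{all} MVCs of some rendering SVC are cached. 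Hence
\begin{equation}
L(\mathcal{M}) = \sum_{d}\, g(d)\, c_d(\mathcal{M}), \qquad c_d(\mathcal{M}) = \max_{f:\,d\in\mathcal{R}_f}\, \mathbf{1}\!\left[\mathcal{C}(f)\subseteq\mathcal{M}\right].
\end{equation}
Monotonicity is then immediate: $\mathcal{M}\subseteq\mathcal{M}'$ enlarges the family $\{f:\mathcal{C}(f)\subseteq\mathcal{M}\}$, hence $\mathcal{R}(\mathcal{M})\subseteq\mathcal{R}(\mathcal{M}')$, and since every $g(d)\geq 0$ we get $L(\mathcal{M})\leq L(\mathcal{M}')$.

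Second, because the weights $g(d)$ are nonnegative and a nonnegative combination of supermodular functions is again supermodular, I would reduce the claim to showing that each indicator $c_d$ is supermodular in $\mathcal{M}$. The heart of the argument is the single-SVC case $c_d(\mathcal{M}) = \mathbf{1}[\mathcal{C}(f)\subseteq\mathcal{M}]$. For $t\notin\mathcal{M}$ its marginal gain is
\begin{equation}
\mathbf{1}[\mathcal{C}(f)\subseteq\mathcal{M}\cup\{t\}] - \mathbf{1}[\mathcal{C}(f)\subseteq\mathcal{M}] = \mathbf{1}[t\in\mathcal{C}(f)]\,\mathbf{1}\!\left[\mathcal{C}(f)\setminus\{t\}\subseteq\mathcal{M}\right],
\end{equation}
which is nondecreasing in $\mathcal{M}$. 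This is exactly the increasing-returns intuition: the last missing MVC $t$ is worthless until the remaining MVCs of $\mathcal{C}(f)$ are already cached, and a larger cache is more likely to hold them; comparing $\mathcal{M}\subseteq\mathcal{M}'$ therefore yields the increasing-difference inequality $L(\mathcal{M}\cup\{t\})-L(\mathcal{M})\leq L(\mathcal{M}'\cup\{t\})-L(\mathcal{M}')$ whenever a viewpoint is rendered by a single SVC.

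The main obstacle is the aggregation over a viewpoint that can be rendered by several distinct SVCs, i.e. the $\max$ in the expression for $c_d$. A maximum (logical OR) of supermodular conjunction indicators is not supermodular in general; indeed a pure OR of singletons is the classical \emph{submodular} coverage indicator, which is precisely the behavior underlying the contrasting Lemma~\ref{le.1}. One therefore cannot merely sum the single-SVC bound. I would resolve this by using the serving structure of the model: each served viewpoint is associated with the specific SVC that the edge server reconstructs for it, so that for the purpose of $\mathcal{R}(\mathcal{M})$ the rendering SVC of $d$ is effectively unique (equivalently, one restricts to the regime in which the MVC-requirement sets $\mathcal{C}(f)$ of the SVCs rendering a given $d$ are nested, so the minimal requirement dominates). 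Under this reduction $c_d$ collapses to a single conjunction indicator, the per-viewpoint supermodularity of the previous step applies verbatim, and the nonnegative-weight aggregation delivers the lemma. I expect essentially all the real work to sit in justifying this viewpoint-to-SVC reduction; once it is in place, the remainder is the routine conjunction computation above.
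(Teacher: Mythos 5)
Your monotonicity argument and your single-conjunction computation are both correct, and you have in fact put your finger on the genuinely delicate point: when a viewpoint $d$ can be rendered by several SVCs, $c_d(\mathcal{M})=\max_{f:\,d\in\mathcal{R}_f}\mathbf{1}[\mathcal{C}(f)\subseteq\mathcal{M}]$ is an OR of supermodular indicators and is \emph{not} supermodular in general. The problem is that your resolution of this obstacle is asserted rather than proved, and it cannot be derived from the model as stated. In the paper, SVCs are windows of $X$ MVCs centered on different tiles, so the requirement sets $\mathcal{C}(f_1),\mathcal{C}(f_2)$ of two SVCs rendering the same viewpoint are overlapping but generically \emph{not} nested, and with $X_0>\hat{I}$ the sets $\mathcal{R}_{f_1},\mathcal{R}_{f_2}$ do intersect. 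Concretely, take $d\in\mathcal{R}_{f_1}\cap\mathcal{R}_{f_2}$ with $t\in\mathcal{C}(f_1)\setminus\mathcal{C}(f_2)$, and set $\mathcal{M}=\mathcal{C}(f_1)\setminus\{t\}$, $\mathcal{M}'=\mathcal{M}\cup\mathcal{C}(f_2)$: adding $t$ to $\mathcal{M}$ newly covers $d$, whereas at $\mathcal{M}'$ the viewpoint $d$ is already served via $f_2$, so (absent compensating completions of other SVCs) the marginal gain strictly \emph{decreases}, violating the claimed increasing-differences inequality. So the step you yourself flagged as carrying ``essentially all the real work'' is not merely unfinished — it is false at the level of generality of the model, and no amount of routine work closes it without an additional no-double-coverage hypothesis (e.g., each viewpoint renderable by a unique SVC, or disjoint $\mathcal{R}_f$'s, under which your reduction and the rest of your argument go through verbatim).

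For comparison, the paper's own proof proceeds differently in form but makes the same silent assumption. It writes the marginal gain as $L(\mathcal{M}_1\cup\{t\})-L(\mathcal{M}_1)=L(\mathcal{X}_1)$ with
\begin{equation*}
\mathcal{X}_1=\bigcup_{f:\ \mathcal{C}(f)\subseteq\mathcal{M}_1\cup\{t\},\ t\in\mathcal{C}(f)}\mathcal{R}_f,
\end{equation*}
observes that $\mathcal{M}_1\subseteq\mathcal{M}_2$ implies $\mathcal{X}_1\subseteq\mathcal{X}_2$ (more SVCs are completed by $t$ under the larger cache), and concludes. The inclusion $\mathcal{X}_1\subseteq\mathcal{X}_2$ is correct, but the opening equality is exactly where the overlap issue hides: it fails to subtract viewpoints in $\mathcal{X}_i$ already rendered by SVCs fully contained in $\mathcal{M}_i$, which is precisely the mechanism of the counterexample above. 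In short, you diagnosed the crux more honestly than the paper's proof does, but neither your proposal nor the paper actually discharges it; your submission is a correct proof only of the restricted (unique-rendering or nested-requirements) case, and you should state that restriction as an explicit hypothesis rather than present it as a consequence of the serving model.
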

\begin{proof}
See Appendix B.
\end{proof}
Problem~\eqref{obj.caching} is a maximal coverage problem when only SVCs are cached. A greedy algorithm can be applied to obtain a near-optimal solution to the maximal coverage problem~\cite{khuller1999budgeted,shanmugam2013femtocaching}. However, when caching both MVCs and SVCs, the submodularity of function $L(\mathcal{F})$ no longer holds. 
Therefore, we design a heuristic algorithm to cache different types of VCs. The main idea is to successively increase the value of $L(\mathcal{F})$ by updating the sets of VCs cached at the edge server, i.e., $\mathcal{M}$ and $\mathcal{S}$. The proposed content placement algorithm has three parts, as presented in Algs.~\ref{al.caching1} to~\ref{al.caching3}, respectively.  The three parts aim to trade off content delivery delay reduction, by caching SVCs, and cache resource usage minimization, by caching MVCs. In the algorithms, we introduce an index as a basis for placing the VCs in the cache, which is defined by
\[l(\mathcal{C}| \mathcal{F}) =  L(\mathcal{C} \cup \mathcal{F})- L(\mathcal{F}).\] 
The index, $l(\mathcal{C}| \mathcal{F})$, represents the increased probability of delay requirement satisfaction by caching the VCs in set $\mathcal{C}$, given that the VCs in set $\mathcal{F}$ are already cached.
In addition, the overall data size of the VCs in set $\mathcal{F}$ is defined by 
\[w(\mathcal{F}) = \sum_{t \in \mathcal{M}} w_t^\text{M} + \sum_{f \in \mathcal{S}} w_f^\text{S}.\]

\begin{algorithm}[t]
\caption{Content Placement Scheme (Part 1)} 
\label{al.caching1}
\begin{algorithmic}[1]
\STATEx {\% Placing MVCs to fill the cache}
\STATE {Initialize set $\mathcal{M} \leftarrow \emptyset$.}
\WHILE {$w(\mathcal{M}) \leq C$}
\parState {Find MVCs in set $\mathcal{C}(f^*), f^*\in \mathcal{T}_S$, where:}
\[f^* = \text{argmax}_f l(\mathcal{C}(f)| \mathcal{M}).\]
\IF {$w(\mathcal{C}(f^*) \cup \mathcal{M}) \leq C$}
\STATE {$\mathcal{M} \leftarrow \mathcal{C}(f^*) \cup \mathcal{M}$.}
\ELSE
\State {\textbf{Break}}
\ENDIF
\ENDWHILE
\STATE {Return set $\mathcal{M}$.}
\end{algorithmic}
\end{algorithm}

\begin{algorithm}[t]
\caption{Content Placement Scheme (Part 2)} 
\label{al.caching2}
\begin{algorithmic}[1]
\STATEx {\% Replace MVCs with SVCs}
\STATE {Initialize set $\mathcal{D} \leftarrow \emptyset$.}
\STATE {Run Algorithm~\ref{al.caching1}, obtain $\mathcal{M}$, and $\mathcal{F} \leftarrow \mathcal{M}$.}
\WHILE {$L(\mathcal{F}') \geq L(\mathcal{F})$}
\STATE {$\mathcal{F} \leftarrow \mathcal{F}'$}
\parState {Find SVC $f_1^*$ or MVCs $\mathcal{C}(f_1^*)$, where:
\!\! \!\!\[\{f_1^*, f_2^*\} \!\!= \!\! \underset{\{f_1, f_2\}}{\mathrm{argmax}} \!\!  
\begin{cases}
l(f_1| \mathcal{F}\backslash \mathcal{C}(f_2) ), \text{ where } f_1 \notin \mathcal{S},\\
l(\mathcal{C}(f_1)| \mathcal{F}\backslash \mathcal{C}(f_2) )
\end{cases}\]
and $\mathcal{C}(f_2) \subseteq \mathcal{M}$.}
\State {$\mathcal{D} \leftarrow \mathcal{C}(f_2^*)$.}
\WHILE {$w(\{f_1^*\} \cup \{\mathcal{F}\backslash \mathcal{D} \}) \geq C$}
\parState {$\mathcal{D} \leftarrow \mathcal{C}(f_3)$ , where $f_3$ can be obtained as follows:}
\[f_3 =\underset{f}{\mathrm{argmax}} L( \mathcal{F}\backslash \{\mathcal{D}\cup \{\mathcal{C}(f)\}) , \mathcal{C}(f)\subseteq \mathcal{M}. \]
\iffalse
\parState {Delete a stereoscopic video $f'$ or tiled videos $C(f')$, where}
\[f' =\text{argmax}_{f} 
\begin{cases}
 L( \mathcal{F}\backslash \{\mathcal{D}\cup \{f\} \}), \text{ if } f\in \mathcal{S},\\
 L( \mathcal{F}\backslash \{\mathcal{D}\cup \{\mathcal{C}(f)\})  \text{ if } \mathcal{C}(f)\subseteq \mathcal{M}.
\end{cases} \]
\fi
\ENDWHILE
\parState {Let $\mathcal{F}' \leftarrow \{f_1^*\} \cup \{\mathcal{F}\backslash \mathcal{D}$\}, if an SVC was selected in Line 4, or $\mathcal{F}' \leftarrow \{C(f_1^*)\} \cup \{\mathcal{F}\backslash \mathcal{D}\}$, if MVCs were selected in Line 4.}
\ENDWHILE
\STATE {Return set $\mathcal{F}_A \leftarrow \mathcal{F}$ and $\Omega_A = L(\mathcal{F})$.}
\end{algorithmic}
\end{algorithm}

\begin{algorithm}[t]
\caption{Content Placement Scheme (Part 3)} 
\label{al.caching3}
\begin{algorithmic}[1]
\STATEx {\% Fill the cache by SVCs to avoid local optimum}
\STATE {Run Algorithm~\ref{al.caching2}, obtain $\mathcal{F}_A$.}
\STATE {Delete all MVCs in set $\mathcal{F}_A$, and the result is denoted by $\mathcal{F}_B$. }
\WHILE {$w(\mathcal{F}_B)\leq C$}
\parState {Find stereoscopic video $f^*$, where:}
\[f^* = \text{argmax}_f l(f| \mathcal{F}_B).\]
\IF {$w(\{f^*\} \cup \mathcal{F}_B) \leq C$}
\STATE {$\mathcal{F}_B \leftarrow \{f^*\} \cup \mathcal{F}_B$.}
\ELSE
\STATE {\textbf{Break}}
\ENDIF
\ENDWHILE
\IF{$L\{\mathcal{F}_B\} < \Omega_A$}
\STATE {$\mathcal{F} \leftarrow \mathcal{F}_A$.}
\ELSE
\STATE {$\mathcal{F} \leftarrow \mathcal{F}_B$.}
\ENDIF
\STATE {Return set $\mathcal{F}$ as the final caching solution, and update $\mathcal{M}$ and $\mathcal{S}$ accordingly.}
\end{algorithmic}
\end{algorithm}

First, in Alg.~\ref{al.caching1}, MVCs are cached to maximize the probability of delay requirement satisfaction. The algorithm provides an initial caching solution in which the edge server caches as many VCs as possible, since MVCs have smaller data sizes than the corresponding SVCs. Specifically, the algorithm selects MVCs with the highest indexes to increase $L(\mathcal{F})$ greedily. For simplicity, in each iteration, a group of MVCs are cached based on the SVC that they can be stitched and projected to. MVCs with the highest indexes are sequentially cached until the cache is full.
Then, based on the initial caching solution provided by Alg.~\ref{al.caching1}, Alg.~\ref{al.caching2} replaces cached MVCs with SVCs and other MVCs that are not cached. The objective is to further increase the probability of delay requirement satisfaction by reducing the computing delay until no MVCs can be replaced to further improve the probability. At the end of Alg.~\ref{al.caching2}, a local optimum of Problem~\eqref{obj.caching} can be obtained. In order to prevent the algorithm from being trapped into a local optimum, Alg.~\ref{al.caching3} evaluates the probability of delay requirement satisfaction when only SVCs are cached. We select the caching solution, i.e., $\mathcal{F}$, which has the highest probability of delay requirement satisfaction from two cases: jointly caching SVCs and MVCs (i.e., $\mathcal{F}_A$ in Alg.~\ref{al.caching2}) and caching SVCs only (i.e., $\mathcal{F}_B$ in Alg.~\ref{al.caching3}). 

The reason for comparing the two cases is given in the following proposition. Let $\omega \in [0,1]$ denote the percentage of the cache space assigned for caching SVCs, i.e., $\sum_{f\in \mathcal{S}} w_f^S = \omega C$. The probability of delay requirement satisfaction by caching SVCs with size $\omega C$ and MVCs with size $(1-\omega) C$ through Alg.~\ref{al.caching2} is denoted by $L(\omega)$. Let $L^M(1-\omega)$ and $L^S(\omega)$ represent the probabilities of delay requirement satisfaction by caching MVCs and SVCs, respectively, where $L(\omega) = L^M(1-\omega)+ L^S(\omega)$. 
\begin{proposition}\label{prop.1}
The following assumptions are made:
\begin{itemize}
    \item The same type of VCs has the same data size.
    \item The ratio $\omega$ can be any real value between 0 and 1, and $L^M(1-\omega)$, and $L^S(\omega)$ are differentiable. 
\end{itemize}
If the cache is initially filled by MVCs, and the MVCs are gradually replaced by the SVCs, i.e., increasing $\omega$, there are no more than two local maximizers of $\omega$ for Problem~\eqref{obj.caching}. If two local maximizers exist, one of them is either $\omega = 1$ or $\omega = 0$. 
\end{proposition}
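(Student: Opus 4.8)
The plan is to reduce the proposition to a statement about the curvature of the single–variable function $L(\omega)=L^{M}(1-\omega)+L^{S}(\omega)$ on the interval $[0,1]$, and then to count its local maximizers by elementary analysis. First I would use Lemmas~\ref{le.1} and~\ref{le.2} together with the equal–data–size assumption to pin down the shapes of the two summands. Since all SVCs share one data size, the SVC budget $\omega C$ is proportional to the number of cached SVCs, and filling the cache greedily under the non-decreasing submodular function of Lemma~\ref{le.1} produces diminishing marginal gains; hence $L^{S}(\omega)$ is non-decreasing and concave, so $(L^{S})''\le 0$. Symmetrically, equal MVC sizes make the MVC budget $(1-\omega)C$ proportional to the MVC count, and the non-decreasing supermodular function of Lemma~\ref{le.2} yields increasing marginal gains, so $L^{M}(\cdot)$ is non-decreasing and convex; composing with the affine map $\omega\mapsto 1-\omega$ preserves convexity, giving $(L^{M})''\ge 0$ when viewed as a function of $\omega$. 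The differentiability assumption lets me work directly with $L'$ and $L''$, and the boundary values $L^{S}(0)=L^{M}(0)=0$ are recorded for handling the endpoints.

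Second, I would write $L''(\omega)=(L^{M})''(1-\omega)+(L^{S})''(\omega)$, a sum of a nonnegative and a nonpositive term, and prove the key structural claim that $L''$ changes sign at most once on $[0,1]$, so that $L$ is exactly one of four types: convex, concave, convex-then-concave, or concave-then-convex. I would establish this single-crossing property from the monotonicity of the marginal-benefit profiles generated by the greedy replacement of Algorithm~\ref{al.caching2}: as $\omega$ grows the SVC marginal gain $(L^{S})'(\omega)$ is non-increasing while the MVC marginal loss $(L^{M})'(1-\omega)$ is also non-increasing, and the equal-size greedy ordering forces the convex-to-concave transition of $L$ to occur at a single budget split rather than reversing repeatedly. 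This single-crossing step is the main obstacle: signing the two curvature terms follows immediately from sub/supermodularity, but ruling out multiple curvature reversals requires exploiting the ordered marginal gains of the greedy procedure, not just the local inequalities of the two lemmas.

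Third, with the shape of $L$ restricted to these four types, I would finish by elementary calculus on the compact interval $[0,1]$. If $L$ is concave, $L'$ is non-increasing and crosses zero at most once from $+$ to $-$, giving a unique local maximizer. If $L$ is convex, $L'$ is non-decreasing, so $L$ is monotone or dips then rises, and local maxima can occur only at the endpoints $\omega=0,1$. If $L$ is convex-then-concave, $L'$ rises then falls: on the concave branch it can cross zero from $+$ to $-$ at most once (one interior local maximizer), while $\omega=0$ becomes a local maximizer only when $L'(0^{+})\le 0$, in which case one of the two maximizers is the endpoint $\omega=0$; a short check rules out a simultaneous maximizer at $\omega=1$. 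The concave-then-convex case is symmetric, pairing a possible interior maximizer with the endpoint $\omega=1$. In every case there are at most two local maximizers, and whenever two occur at least one is $\omega=0$ or $\omega=1$, which is precisely the assertion of the proposition.
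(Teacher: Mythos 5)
Your overall plan---reduce everything to the one-dimensional function $L(\omega)=L^{M}(1-\omega)+L^{S}(\omega)$, use Lemmas~\ref{le.1} and~\ref{le.2} to sign the curvature of the two summands, and then count local maximizers by a shape classification---is the same skeleton as the paper's Appendix~C argument. But your proof has a genuine gap exactly where you flag it: the single-crossing property of $L''$. Sub/supermodularity only give you the second-order facts $(L^{S})''\le 0$ and $(L^{M})''\ge 0$; the sum of a nonpositive and a nonnegative function can change sign arbitrarily many times, and each extra curvature reversal can create an extra interior local maximizer, so without the single-crossing step the proposition simply does not follow. Your attempted justification---``the SVC marginal gain is non-increasing while the MVC marginal loss is also non-increasing''---does not work either: two decreasing curves can cross any number of times, so monotonicity of the two marginal profiles is not enough to bound the number of critical points of $L$.

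What closes this hole in the paper is a \emph{third-order} (curvature-of-the-marginal) argument that you never supply. The paper writes $\nabla_\omega L(\omega)=\lim_{\Delta\to 0}\frac{1}{\Delta}\bigl\{L^{S}(\Delta\,|\,\omega)-L^{M}(\Delta\,|\,1-\omega-\Delta)\bigr\}$ and then proves, via a coverage/overlap argument, that the marginal curves themselves have one-signed curvature: when few SVCs are cached, the viewpoints rendered by an additional SVC are unlikely to overlap the already-covered viewpoints, giving the second-difference inequality
\begin{equation*}
L^{S}(\Delta\,|\,\omega+\delta)-L^{S}(\Delta\,|\,\omega)\;\ge\;L^{S}(\Delta\,|\,\omega+2\delta)-L^{S}(\Delta\,|\,\omega+\delta),
\end{equation*}
and the symmetric claim for $L^{M}(\Delta\,|\,1-\omega-\Delta)$. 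Since one marginal curve is convex in $\omega$ and the other concave, they intersect at most twice, so $L'$ has at most two zeros; the paper's three-case analysis (no, one, or two intersections) then yields at most two local maximizers, with any second one forced to $\omega=0$ or $\omega=1$---your endpoint bookkeeping in the final paragraph is essentially this case analysis and is fine. To repair your proof, you would need to prove precisely this curvature property of the marginal-gain curves (equivalently, convexity of $L'$, which immediately gives your single-crossing claim for $L''$); invoking the greedy ordering of Algorithm~\ref{al.caching2} by name, as you do, is not a proof of it.
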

\begin{proof}
See Appendix C. 
\end{proof}
As shown in Proposition~\ref{prop.1}, Alg.~\ref{al.caching2} may find a local optimum under mild conditions. Therefore, we need to compare it with the probability of delay requirement satisfaction when $\omega = 1$ in Alg.~\ref{al.caching3}. The output of Alg.~\ref{al.caching3} provides a near-optimal caching solution given the limited cache size.  
\subsection{Parallel Content Placement}
Spatio-temporally dividing VR videos that are several minutes in length may generate thousands of VCs. Although Alg.~\ref{al.caching3} can provide a near-optimal caching solution, its complexity can be significant for caching a large number of VCs.  Observing the independence among VCs in different time segments, we propose a parallel content placement scheme. Specifically, we first divide the whole VC catalog $\mathcal{A} = \{\mathcal{T}_0,\mathcal{T}_1,\mathcal{T}_S\}$ into $G$ VC subsets, in which the $g$-th subset is denoted by $\mathcal{A}_g$. Each subset contains VCs in one or multiple time segments. Then, we iteratively determine the cache size allocated for caching VCs in each subset at the edge server. The allocated cache size for VCs in subset $\mathcal{A}_g$ is denoted by $C_g$. Based on $C_g$, Alg. ~\ref{al.caching3} can be used to place VCs in a parallel manner. The optimization problem for determining the optimal cache sizes for the subsets is 
\begin{subequations}
\label{obj:3}
\begin{align}
 \max_{\{C_g, \forall g\}} & &\sum_{g = 1}^G P_g(C_g)\\
 \text{s.t.}& & \sum_{g = 1}^G C_g \leq C, \label{obj:3_c1}\\
 && C_g \geq 0, \forall s\label{obj:3_c2}
\end{align}
\end{subequations}
where function $P_g(C_g)$ represents the probability of delay requirement satisfaction for caching VCs in subset $\mathcal{A}_g$ with cache size $C_g$. The function is defined by 
\[ P_g(C_g)=\sum_{d\in \mathcal{V}_g} p(d)P(T(d) < H) \]
where $\mathcal{V}_g$ denotes the set of viewpoints to be rendered by the VCs in subset $\mathcal{A}_g$. %The value of $P_g(C_g)$ is determined by the viewpoint distribution and the data size assigned for caching videos in a subset. 
%Given the two factors, Alg.~\ref{al.caching3} can generate the caching solution for the videos among subsets. 
The function is quasi-convex and monotonically increases as cache size $C_g$ increases. While the closed-form formulation of $P_g(C_g)$ is intractable, we can solve Problem~\eqref{obj:3} via an optimization method.

We use the relaxed heavy ball ADMM technique to decouple Problem~\eqref{obj:3} into sub-problems, which maximize the probability of delay requirement satisfaction for caching VCs in individual subsets, i.e., $P_g(C_g), \forall g$. The problem decomposition is presented in Appendix D. The algorithm for placing VCs in multiple subsets is Alg.~\ref{al:full}. In Lines~3 and~4, the auxiliary variables (provided in Appendix D) and cache sizes allocated for subsets are determined iteratively. The cache size for the $g$-th subset, i.e., $C_g$, obtained in the $m$-th iteration is denoted by  $C_g^{m}$. From~\eqref{lag2_}, obtaining the value of $C_g^{m+1}$ requires the knowledge of the gradient of $P_g(C_g^{m+1})$. We estimate the function through a piece-wise linear function $\hat{P}_g(C_g^{m+1})$. In each iteration, when $C_g^{m+1}$ is updated, we use Alg.~\ref{al.caching3} to obtain the probability of delay requirement satisfaction, as shown in Lines 7 to 9. Accordingly,  function $\hat{P}_g(C_g)$ is updated by Eq.~\eqref{eq:update} in Line~10, where $C_g^{-}$ represents the maximum cache size for subset $g$ obtained in a previous iteration which is lower than $C_g^{m+1}$. Note that $C_g^{+}$ represents the minimum cache size for subset $g$ obtained in a previous iteration, which is higher than $C_g^{m+1}$. As the number of iterations increases, the estimated function $\hat{P}_g(C_g^{m+1})$ approaches the actual function ${P}_g(C_g^{m+1})$. For exploring the actual value of ${P}_g(C_g)$ in the approximation, we apply a zero-mean variance-attenuated Gaussian noise on $C^{m+1}_g$ in Line~5. The noise, following a normal distribution, $\mathcal{N}(0, \sigma^2)$, has variance $\sigma^2$ attenuated in each iteration, as shown in Line~11. The algorithm is stopped when the Lagrangian of the problem, as shown in~\eqref{eq:Lag}, converges and the exploration on $\hat{P}_g(C_g)$ completes.

\begin{figure*}[t]
\begin{align}
\hat{P}_g^{m+1}(C_g)=
\begin{cases}
\frac{\hat{P}_g^{m+1}(C_g^{m+1}) - \hat{P}_g^m(C_g^{-})}{C_g^{m+1} - C_g^{-}} (C_g-C_g^{-}) +  \hat{P}_g^m(C_g^{-}), & \text{for }  C_g^{-}\leq C_g < C_g^{m+1}\\
\frac{\hat{P}_g^m(C_g^{+})-\hat{P}_g^{m+1}(C_g^{m+1})}{C_g^{+} - C_g^{m+1} } (C_g^{+} - C_g) + \hat{P}_g^{m+1}(C_g^{m+1}), & \text{for }  C_g^{m+1}\leq C_g < C_g^{+}\\
\hat{P}_g^m(C_g), & \text{Otherwise}. 
\end{cases} \label{eq:update}
\end{align}
\begin{equation}
    R(s_{u,k}, a_{u,k})\! = \!\!\begin{cases}
    \sum_{k' = k+\phi}^{\infty} \!\kappa^{k'-k} f(v;\mathcal{B}_{u,k})\textbf{1}[v\in \mathcal{U}_{u,k}(\mathbf{a}_{u,k-1})], \!\!\!\!& \text{if } a_{u,k} =1;\\
    0, & \text{if } a_{u,k} =0.
    \end{cases} \label{eq.reward}
\end{equation}
\begin{align}
    &\lambda(s_{u,k})\!\! = \!\!R(s_{u,k}, 1)\!\! +  \!\!\kappa^\phi\! \big\{\!\!\!\!\sum_{s_{u,k+\phi}\in \mathcal{Y}^u}\!\!\!\!\!\! P(s_{u,k+\phi}|s_{u,k},1)  \!\!\max_{a_{u, k+\phi}}\!\! \{ Q(s_{u,k+\phi}, a_{u, k+\phi})\}\! -\notag\\
 &\!\!\!\! \!\!\!\!\sum_{s_{u,k+\phi}\in \mathcal{Y}^u} \!\!\!\!\!\!\!P(s_{u,k+\phi}|s_{u,k},0) \!\!  \max_{a_{u, k+\phi}} \!\! \{ Q(s_{u,k+\phi}, a_{u, k+\phi})\}\big\}. \label{eq.wi}
\end{align}
\hrulefill
\end{figure*}

The aforementioned problem decomposition and parallel computing significantly reduce the decision space as compared to solving Problem~\eqref{obj:3} directly. Without parallel content placement, the worst time complexity of Alg.~\ref{al.caching3} is $(|\mathcal{V}|(M_1+M_3) + 2|\mathcal{V}|^2M_2)$, where $M_1$, $M_2$, and $M_3$ are the iteration numbers in Algs.~\ref{al.caching1},~\ref{al.caching2}, and~\ref{al.caching3}, respectively. {With an increase in the number of viewpoints, the time complexity increases quadratically.} By Alg.~\ref{al:full}, the content placement for different video subsets is in parallel. If the popularity functions $\{{P}_g(C_g), \forall g\}$ are known in advance, the number of iterations for running the relaxed heavy ball ADMM algorithm can be low, and the worst time complexity of content placement for a subset becomes $(|\mathcal{V}_g|(M_1+M_3) + 2|\mathcal{V}_g|^2M_2)$.

\begin{algorithm}[t]
    \caption{Parallel Content Placement Scheme} \label{al:full}
    \begin{algorithmic}[1]
    \State Initialize estimation function $\hat{P}^0_g(C_g)$, $m = 0$.
   \WHILE 1
    \State Calculate $C_g^{m+1}\{\hat{P}^k_g(C_g)\}$ by~\eqref{lag2_}.
    \State Update variables in set $\Gamma$ by~\eqref{lag3}-\eqref{lag4}.
    \State  $C_g^{m+1} \leftarrow C_g^{m+1} + \mathcal{N}(0,\sigma^2), \forall g$.
    \FOR {$g \in \mathcal{G}$}
    \parState {Place VCs in subset $g$ given cache size $C_g^{m+1}$ by Alg.~\ref{al.caching3}. }
    \State $ \hat{P}_g^{m+1}(C_g^{m+1}) = P_g(C_g^{m+1})$.
    \ENDFOR
    \State Update $\hat{P}_g^{m+1}(C_g)$ by~\eqref{eq:update}.
    \State $\sigma \leftarrow \alpha_{attn}\sigma$, $m = m+1$.
    \IF {$|{\mathcal{L}(\Gamma)^{m+1}-\mathcal{L}(\Gamma)}|$ and $\sigma$ are less than thresholds}
    \STATE \textbf{Break}.
    \ENDIF
    \ENDWHILE
     \STATE {Return the caching solution with cache capacity $C_g$ for all VC subsets.}
    \end{algorithmic}
  \end{algorithm}

\section{DRL-based Content Delivery}
The content placement scheme proposed in Section~\ref{sec.caching} maximizes the probability of delay requirement satisfaction through caching VCs at the edge server. However, a viewpoint may not be always rendered in time, especially when unpopular videos are requested. Therefore, we propose a content delivery scheduling scheme to solve Problem~\eqref{eq.problem} in this section, given {the caching solution and predefined video quality setting $X_0$}. Specifically, we design a Whittle index (WI) based scheduling policy in restless multi-armed bandit (RMAB) and a DRL-based scheme to evaluate the WI in the scheduling policy. 

\subsection{Whittle Index Formulation}
Problem~\eqref{eq.problem} with stochastic constraint~\eqref{eq.constraint} is an RMAB problem. For $U$ VR headsets, we define $U$ corresponding controlled Markov chains. The state of the Markov chain for VR headset $u$ in time slot $k$ is denoted by $s_{u,k}$, and the state space for VR headset $u$ is denoted by $\mathcal{Y}^u$. State $s_{u,k}$ consists of 1) the historical viewpoint trajectory of VR headset $u$, i.e., $\{v_{u,k^*}, k^* = k-P, \dots, k\}$; 2) the SVCs in the local buffer of VR headset $u$; 3) the request $d_{u,k}$; and 4) the sets of VCs cached at the edge server. 

Denote the binary control variable for the Markov chain by $a_{u,k}$, {where $a_{u,k} = 1$ if the request of headset $u$ is scheduled in time slot $k$, and $a_{u,k} = 0$ otherwise.} Consider the average time length for a state transition as an epoch, which is denoted by ${\phi}$. The average number of time slots in an epoch is $\phi = \bar{T}/(E\times \delta)$, which is utilized to find tractable state transition probabilities. The probability of transiting from state $s_{u,k}$ to $s_{u,k+{\phi}}$ is approximated by $P(s_{u,k+\phi}|s_{u,k}, a_{u,k})$, where $\sum_{s_{u,k+\phi}\in \mathcal{Y}^u} P(s_{u,k+\phi}|s_{u,k}, a_{u,k}) = 1$. The reward of scheduling a request is set as the number of future viewpoints that can be rendered from granting the request, given by~\eqref{eq.reward}.
The discounted reward model is applied in~\eqref{eq.reward}, where $\kappa$ is a discount factor. Problem~\eqref{eq.problem} with constraint~\eqref{eq.constraint} is now an RMAB problem. Whenever a computing unit is available, we select one request from one of the~$U$ VR headsets to schedule with unknown transition probabilities. 
%Compared to multi-armed bandit problems, RMAB allows state transition when an arm is passive. In our case, the state of a user VR headset still evolves even if the user VR headset is not scheduled.

Our method of solving the RMAB problem is based on the WI, which is a heuristic solution with excellent empirical performance~\cite{whittle1988restless}. Specifically, the WI-based method defines a subsidy, i.e., the WI, for each VR headset if its current request cannot be scheduled by the edge server immediately according to the current state of the VR headset.  The edge server can receive a higher long-term reward by scheduling the requests with a higher subsidy, i.e., higher WI. Therefore, the edge server can allocate computing resources to requests by comparing the WIs of requests. The WI of state $s_{u,k}$ can be obtained from~\eqref{eq.wi}.
In~\eqref{eq.wi}, $Q(s_{u,k}, a_{u,k})$ is the state-action value, i.e., the Q value, for state $s_{u,k}$, and action $a_{u,k}$. Eq.~\eqref{eq.wi} presents the WI of VR headset $u$ in time slot $k$. 
%If request set $\mathcal{D}_{u,k}$ has more than one element, the user VR headset may generate WIs for all requests $d \in \mathcal{D}_{u,k}$ respectively, where the WI of request $d$ is denoted by $\lambda(S_{u,k}(d))$, and $S_{u,k}(d)$ represents the state for user VR headset $u$ in time slot $k$ associated with request $d\in \mathcal{D}_{u,k}$.
%, where the set of WIs corresponding to request set $\mathcal{D}_{u,k}$ is denoted by $\Lambda_{u,k}$. 

In time slot $k$, the edge server collects the WIs $\{\lambda(s_{u,k}), \forall u\}$ from all VR headsets and schedules the request of the VR headset with the highest WI value if a computing unit is available. As shown in Appendix E, the RMAB problem is indexable if the content delivery delay for all requests is the same. Even though the content delivery delay can be different in practice, {the standard deviation of the delay is usually small as compared to the average content delivery delay}\footnote{{The standard deviation of the content delivery delay is bounded by half of the computing delay, which is only a portion of the content delivery delay under mild conditions.}}, and the proposed WI-based solution can still be applied.
\iffalse
\begin{proposition}
, the RMAB problem is indexable. \label{prop.wi}
\end{proposition}
\begin{proof}
See Appendix .
\end{proof}
\fi

\subsection{DRL-based Scheduling}
Since the state-action transition probability is unknown for the RMAB problem, the value of the WI in~\eqref{eq.wi} cannot be obtained straightforwardly. We adopt the DRL approach to approximate the WI value for each VR headset in each time slot. Two neural networks are constructed for each VR headset. One neural network approximates the Q value, and the other neural network approximates the WI value. The weight vectors of the two neural networks corresponding to the Q value and the WI value are denoted by $\boldsymbol{\theta}_Q$ and $\boldsymbol{\theta}_W$, respectively. 

\begin{figure}[t]
		\centering
	  	\includegraphics[width=0.5\textwidth]{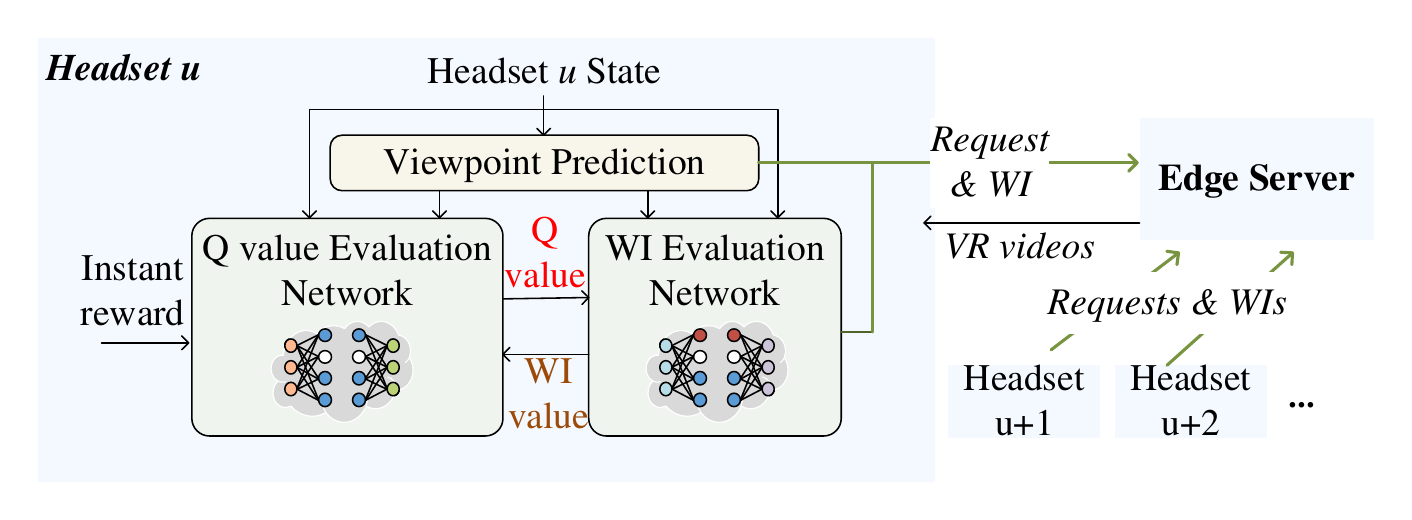}
	  	\caption{DRL-based content delivery scheduling scheme diagram.}\label{fig:5}
\end{figure}

The proposed DRL-based content delivery scheduling scheme is illustrated in Fig.~\ref{fig:5}. At the beginning of time slot $k$, VR headset $u$ observes its state $s_{u,k}$, predicts the future user viewpoints using viewpoint prediction techniques~\cite{Nasrabadi}, and obtains content delivery requests. The Q value and WI evaluation networks approximate the set of Q values and WI values regarding state $s_{u,k}$, denoted by ${Q}(s_{u,k}, a_{u,k}|\boldsymbol{\theta}_Q)$ and ${\lambda}(s_{u,k}, a_{u,k}|\boldsymbol{\theta}_W)$, respectively. With the set of requests, the approximated WI values are sent to the edge server, and the edge server schedules the request with the highest WI when a computing unit is available. Furthermore, at each time slot, the weight vectors $\boldsymbol{\theta}_Q$ and $\boldsymbol{\theta}_W$ are updated according to the reward obtained in content delivery scheduling. In particular, if the request of VR headset $u$ is scheduled, the VR headset observes the reward as given by~\eqref{eq.reward}. Otherwise, we use the approximated WI ${\lambda}(s_{u,k}|\boldsymbol{\theta}_W)$ as the reward, which represents the difference of Q values between two successive states when a request is not satisfied. According to~\cite{AVRACHENKOV2022110186}, weight vector $\boldsymbol{\theta}_Q$ is updated by minimizing the loss function for evaluating the Q value, given by
\begin{align}
    &L(\boldsymbol{\theta}_Q)\! = \!\![{Q}(s_{u,k}, a_{u,k}|\boldsymbol{\theta}_Q) - {Q}(s_{u,k+\phi}, a_{u,k}|\boldsymbol{\theta}_Q)\notag \\
    &\hspace{0.5cm}\lambda(s_{u,k}|\boldsymbol{\theta}_W) (1- a_{u,k}) - R(s_{u,k}, a_{u,k}) a_{u,k}]^2.
\end{align}
%where $S_{u,k}(d)$ and $a_{u,k}(d)$ represent the state and the control variable for user VR headset $u$ in time slot $k$ associate with request $d\in \mathcal{D}_{u,k}$, respectively. 
Furthermore, we generate a reference WI value based on the outputs of the Q value and WI value neural networks, where
\begin{equation}
    \hat{\lambda}(s_{u,k}) = {\lambda}(s_{u,k}|\boldsymbol{\theta}_W) - \varphi[{Q}(s_{u,k}, 0|\boldsymbol{\theta}_Q) -{Q}(s_{u,k}, 1|\boldsymbol{\theta}_Q)]. \label{eq.lambda}
\end{equation}
In~\eqref{eq.lambda}, $\varphi$ is the step size for approximating the WI value. From~\eqref{eq.lambda}, if the approximation of ${\lambda}(s_{u,k})$ is accurate, ${Q}(s_{u,k}, 0|\boldsymbol{\theta}_Q)$ should be equal to ${Q}(s_{u,k}, 1|\boldsymbol{\theta}_Q)$. Thus, weight vector $\boldsymbol{\theta}_Q$ is updated by minimizing the loss function for evaluating the WI value, given by
\begin{equation}
    L(\boldsymbol{\theta}_W) = [ \varphi({Q}(s_{u,k}, 0|\boldsymbol{\theta}_Q) -{Q}(s_{u,k}, 1|\boldsymbol{\theta}_Q))]^2.
\end{equation}
The proposed DRL-based content delivery scheduling scheme is summarized in Alg.~\ref{al:rmab}, where $\varphi_Q$ and $\varphi_W$ are the learning rates for evaluating weight vectors $\boldsymbol{\theta}_Q$ and $\boldsymbol{\theta}_W$, respectively. At each time slot, a VR headset calculates WI,  which is a float number (4 bytes), in a distributed manner and uploads a WI for content delivery scheduling. Compared to sending the full VR headset state to the edge server for scheduling, the proposed scheme can significantly reduce communication overhead. 

\begin{algorithm}[t]
    \caption{DRL-based Content Delivery Scheduling Scheme}\label{al:rmab}
    \begin{algorithmic}[1]
      \STATE Initialize weight vectors $\theta_W$ and $\theta_Q$ for all VR headsets. $\epsilon = 1$.
      \FOR {time slot $k  = 1, \dots, K$}
      \FOR {VR headset $u = 1:U$ }
      \parState{Request the new SVC and report WI set $\{\lambda(s_{u,k}; \theta_W)\}$.}
      \ENDFOR
      \IF{any computing unit is available}
      \parState{With probability $\epsilon$, randomly schedule a request; otherwise, schedule request $d^*$ from the VR headset with the highest WI.}
      \ENDIF
      \FOR {VR headset $u = 1:U$} 
      \parState{Update the Q value evaluation network by $\theta_Q \leftarrow \theta_Q - \varphi_Q \nabla  L(\theta_Q)$.}
      \parState{Update the WI evaluation network by $\theta_W \leftarrow \theta_W - \varphi_W \nabla  L(\theta_W).$}
      \ENDFOR
      \STATE \textbf{If} $\epsilon > \epsilon_{min}$, $\epsilon \leftarrow \beta_{attn}\epsilon$.
      \ENDFOR
    \end{algorithmic}
  \end{algorithm}
\section{Adaptive Video Quality Adjustment}
%The caching solution, provided in Alg.~\ref{al:full}, and the scheduling policy, provided by Alg.~\ref{al:rmab}, can jointly maximize the hit probability given a video quality setting, i.e., $X_0$ and $X_1$. 
%\subsection{Interplay between Content Caching and Delivery} \label{sec.adj}

The edge caching and content delivery scheduling schemes proposed in Sections V and VI are based on a predefined video quality setting, i.e., a given $X_0$. In this section, we discuss how to optimize video quality in content delivery while meeting frame missing rate requirements. For simplicity, we assume that the data sizes of different MVCs are similar{~\cite{hooft2019tile}}, and the data size of an SVC does not depend on $X_0$. The frame missing rate is jointly determined by the following three factors, as illustrated in Fig.~\ref{fig:delivery}:
\begin{enumerate}
    \item Request waiting time: The request waiting time can be obtained by the time difference between the instant when a content delivery request is generated and the next instant when a request is scheduled by the edge server. The average request waiting time is denoted by $W$.
    %Due to the limited number of computing units on the edge server, the scheduler may not be able to process the request immediately. Before a request is scheduled by the edge server, the VR headset may update a new request whenever the results of viewpoint predictions are changed. 
    \item  Content delivery delay: When a request is scheduled by the edge server, a computing unit is occupied to process and deliver the corresponding SVC. The content delivery delay for request $d_{u,k}$, i.e., $T(d_{u,k})$, is evaluated as in Subsection III.C. %The content delivery delay can be considered as a random variable denoted by $T$. The probability mass function of content delivery delay is determined by the caching solution, and it only depends on viewpoint distribution and channel dynamics.    
    \item Request deadline: The request deadline is the time slot when the requested video will be played on the VR headset. In the example shown in Fig.~\ref{fig:delivery}, slot $k'$ is the request deadline of the request generated in slot $k$.
\end{enumerate}

The request waiting time depends on the content delivery scheduling scheme and the number of VR headsets in the system. The request deadline depends on the viewpoint dynamics of the corresponding user and parameter $X_0$ (since $X_0$ impacts the number of viewpoints that can be rendered by the VR headset using the SVCs in the local buffer). Let random variable $N(X_0)$ denote the number of frame missing events between two successive content delivery for a VR headset given $X_0$, with mean denoted by $\bar{N}(X_0)$.
%The average content delivery delay is determined by Problem~\eqref{obj.caching} and does not change as $X_0$ changes. In addition, the content delivery scheduling scheme, obtained by solving Problem~\eqref{eq.problem}, minimizes the probability of frame missing and determines the waiting time $W$. 
The relation between $\bar{N}(X_0)$ and the video quality setting, i.e., $X_0$, is given by the following lemma:
\begin{lemma}
$\bar{N}(X_0)$ monotonically decreases as $X_0$ increases, under the assumption that the data sizes of MVCs are constant. \label{le.x} 
\end{lemma}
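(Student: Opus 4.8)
The plan is to reduce the lemma to a single monotone-coverage property and then push it through a sample-path coupling of the viewpoint process. First I would make precise the structural fact that drives everything: for $X_0 \le X_0'$ and any SVC $f$, the renderable viewpoint set obeys $\mathcal{R}_f(X_0) \subseteq \mathcal{R}_f(X_0')$. This is immediate from the construction in Section~III, where the SVC centered on a tile is stitched from the $X_0$ nearest layer-$0$ tiles (together with $X-X_0$ layer-$1$ tiles); increasing $X_0$ only appends the next-nearest low-resolution tiles to the stitched region, so the covered spatial area, and hence $\mathcal{R}_f$, can only grow. Under the standing assumption that the MVC data sizes are constant and the SVC data size is independent of $X_0$, the delays $T^{\mathrm{B}}$, $T^{\mathrm{C}}$, and $T^{\mathrm{E}}$, and therefore the content delivery delay $T(d)$, do not depend on $X_0$; this isolates the dependence of $N(X_0)$ on $X_0$ entirely to the coverage sets $\mathcal{R}_f$ and lets the two settings be compared on an equal timeline.

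Second, I would couple the two systems on a common probability space: the user viewpoint trajectory $\{v_{u,k}\}$, the channel-state Markov chain, and the scheduler's randomness are all shared between $X_0$ and $X_0'$. Fix a content-delivery instant $k_0$ at which an SVC $f$ covering $v_{u,k_0}$ enters the buffer, and define the coverage holding time
\[
D(X_0) = \inf\{\, n \ge 0 : v_{u,k_0+n} \notin \mathcal{R}_{f}(X_0)\,\}.
\]
By the nesting $\mathcal{R}_f(X_0) \subseteq \mathcal{R}_f(X_0')$, every sample path gives $D(X_0) \le D(X_0')$, i.e.\ the viewpoint remains renderable at least as long under the larger setting. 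Because the delivery delay is identical under the coupling, the interval during which the viewpoint has left the cached coverage but the next SVC has not yet arrived --- the frame-missing window --- can only shrink when $X_0$ grows. Writing the per-cycle count as a sum of slot indicators $\mathbf{1}[v_{u,k} \notin \mathcal{U}_{u,k}]$, the coupling yields the pathwise domination $N(X_0') \le N(X_0)$, and taking expectations gives $\bar N(X_0') \le \bar N(X_0)$.

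The hard part will be that the content-delivery instants, and hence the partition of time into cycles, are themselves functions of $X_0$: a larger coverage defers the desired viewpoint, so requests are issued later and fewer deliveries occur. Thus the per-slot comparison of the buffer contents $\mathcal{U}_{u,k}$ is not automatic, since under $X_0$ a fresh delivery might momentarily render a viewpoint that a staler (but wider) SVC under $X_0'$ does not. I would resolve this by arguing that, in the coupled process, the deliveries under $X_0'$ are a thinning of those under $X_0$ with each delivered set a superset of its counterpart, so that $\mathcal{U}_{u,k}(X_0) \subseteq \mathcal{U}_{u,k}(X_0')$ holds at every slot and the miss indicator is dominated slot-by-slot. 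Finally, to convert the pathwise per-cycle inequality into the stated claim about the mean $\bar N(X_0)$, I would invoke a renewal-reward argument over the stationary sequence of delivery cycles, so that the expected per-cycle miss count inherits the monotonicity. The delicate bookkeeping of this buffer-domination step, rather than the coverage fact itself, is where the real work lies.
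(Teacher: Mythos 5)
Your first paragraph is sound and matches the driver of the paper's own argument: increasing $X_0$ enlarges each SVC's renderable set $\mathcal{R}_f$, and under the constant-data-size assumption none of $T^{\text{B}}$, $T^{\text{C}}$, $T^{\text{E}}$ depends on $X_0$, so the comparison is purely about coverage. The per-SVC holding-time domination $D(X_0)\le D(X_0')$ is also fine. The genuine gap is exactly the step you flag and then wave through: the claim that deliveries under $X_0'$ are a thinning of those under $X_0$ with each delivered set a superset of its counterpart, yielding $\mathcal{U}_{u,k}(X_0)\subseteq \mathcal{U}_{u,k}(X_0')$ at every slot. This is not true pathwise. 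The desired viewpoint is the \emph{first future viewpoint not renderable by the current buffer}, so under the coupling the two systems generate requests for SVCs with \emph{different centers} at different effective deadlines; the WI scheduler, being state-dependent, then serves the two systems' requests in different orders even with shared randomness. The narrow system completes more delivery cycles per unit time, and a fresh narrow SVC centered further along the motion direction can cover tiles that the wide system's staler, differently-centered SVCs do not (e.g., a narrow SVC of radius $r$ centered at distance $r$ from the previous center reaches distance $2r$, which exceeds the wide radius $r'$ whenever $r<r'<2r$). So the slot-by-slot buffer inclusion fails, and with it the pathwise domination $N(X_0')\le N(X_0)$ as you derive it; the subsequent renewal-reward step also needs the two systems' cycle partitions to be comparable, which you do not establish.

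The paper avoids comparing full buffers pathwise altogether. It fixes the inter-delivery window length at $W+H$ slots — legitimate because the waiting time $W$ is set by scheduling and the user population, and the delivery delay is $X_0$-independent under the data-size assumption — and models the viewpoint's displacement \emph{relative to the covered tiles} as a Markov chain on tiers $\zeta$, with $M(\zeta)$ tiles per tier and downloads folded into the transition probabilities via $q_\zeta$. A frame miss is identified with the tier state exceeding the coverage radius $F$, and $N(X_0)$ is the number of such exceedances in the fixed window starting from $V(0)=0$; monotonicity then reduces to the observation that $F$ is nondecreasing in $X_0$, i.e., a threshold argument on a fixed-length excursion count rather than a whole-system coupling. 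Your reduction in the first paragraph is the same insight as "$F$ increases with $X_0$," and your holding-time coupling is a valid ingredient, but to close your argument you would either have to prove a (false in general) buffer-nesting invariant, or retreat to the paper's move: condition on a cycle of $X_0$-independent length and compare exceedance counts of a single displacement process against two nested thresholds.
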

\begin{proof}
See Appendix F. 
\end{proof}
Based on Lemma~\ref{le.x}, parameter $X_0$ can be adjusted tentatively to maximize video quality, while satisfying the frame missing rate requirement. Specifically, a VR headset can monitor the real-time frame missing rate and upload it to the edge server.  If the average frame missing rate is higher than the frame missing rate requirement, $X_0$ should be increased to reduce the frame missing rate by degrading video quality. Otherwise, $X_0$ can be decreased to improve the video quality, until the frame missing rate requirement is no longer satisfied. 

\begin{figure}[t]
		\centering
	  	\includegraphics[width=0.5\textwidth]{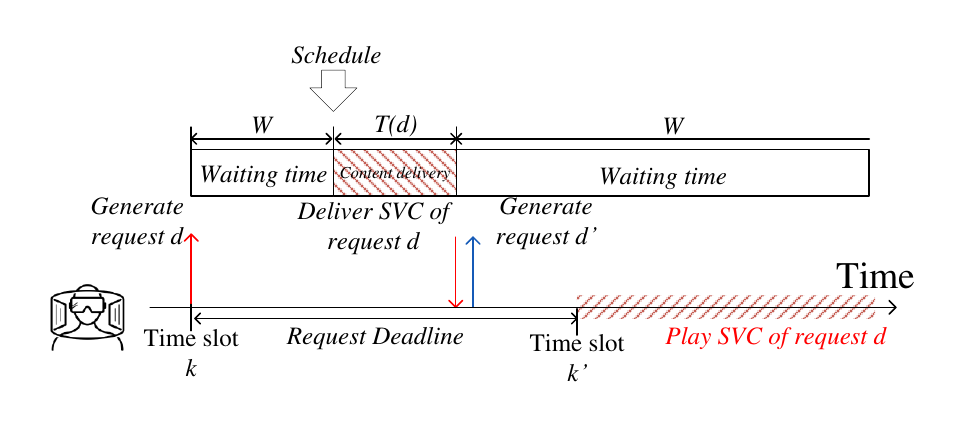}
	  	\caption{Illustration of content delivery processes for a VR headset.}\label{fig:delivery}
\end{figure}

\section{Simulation Results}
\subsection{Parameter Settings}
In the simulation, we use a viewpoint-tracking data set composed of 48 users watching three spherical VR videos~\cite{wu2017dataset}. The details of the VR videos are provided in Table~\ref{table1}. The 360\textdegree~ $\times$ 180\textdegree~ equirectangular video is divided into 24 $\times$ 12 spatial tiles. A user's FoV spans 7 $\times$ 5 tiles. {The length of a time segment, i.e., the playback duration of a VC, is 4 seconds.}  For each time segment, we map the spatial locations of viewpoints into tiles to obtain the probability that request $d$ is generated, i.e., $p(d)$. The data size of MVCs is randomly generated based on a Gaussian distribution, in which the mean is 30 Kbits and the standard deviation is 10 Kbits, and the value of $\alpha$ is 1.3. The content delivery delay threshold  $H$ is 85 ms. The wired transmission rate from the cloud server to the edge server is 700 Mbits/s. In parallel content placement, VCs corresponding to the same time segment, i.e., the playback duration of a VC, form a VC subset, and there are 163 subsets for the three videos. 
%The popularity of VC subsets for the same video is uniformly distributed, and the popularity of VCs in a subset follows the viewpoint popularity distribution provided in the data set. 
In the parallel content placement, the parameters of the relaxed heavy ball ADMM algorithm are set to $\varepsilon_1 = 0.8$, $\varepsilon_2 = 0.85$, and $\rho_1 = \rho_2 = 0.3$.

In the content delivery, there are 10 VR users with VR headsets. {The length of a time slot is 33 ms. } For each user, the viewpoint trajectory is randomly selected from the viewpoint-tracking data set. For processing the content delivery requests from users, the edge server has one computing unit, i.e., $E$ = 1. The transition probabilities of the communication channel between the edge server and users, i.e., $p_H$ and $p_L$, are 0.6 and 0.3, respectively. In the DRL algorithm for content delivery scheduling, the learning rates of the networks for evaluating the Q value and the WI value are 5e-4 and 1e-4, respectively. Neural network structures used to evaluate the Q value and the WI value are presented in Tables~\ref{nn1} and~\ref{nn2}, respectively. At each VR headset, a long short-term memory (LSTM) network with 50 neurons predicts the user's future viewpoint with an accuracy of around 94\%. 
\begin{table}[t]
\caption{Metadata of VR videos~\cite{wu2017dataset}}\label{table1}
\centering{
\begin{tabular}{@{}lllll@{}}
\toprule
No & Video length & Content               & Category & Popularity    \\ \midrule
1  & 2'44"       & Conan360-Sandwich    & Performance & 0.6\\
2  & 3'20"       & Freestyle Skiing      & Sport      & 0.2 \\
3  & 4'48"       & Google Spotlight-Help & Film        & 0.2\\ \bottomrule
\end{tabular}}
\end{table}

\begin{table}[t]
\caption{Network structure for evaluating the WI value}\label{nn1}
\centering{
\begin{tabular}{@{}lll@{}}
\toprule
Layers            & Number of neurons       & Activation Function \\ \midrule
CONV1             & (3$\times$3$\times$ 10) & elu                 \\
POOL1             & (2$\times$2)            & elu                 \\
CONV2             & (3$\times$3$\times$10)  & elu                 \\
LSTM              & 50                      & tanh                \\
Fully Connected 1 & 125                     & elu                 \\
Fully Connected 2 & 64                      & elu                 \\
Fully Connected 3 & 64                      & elu                 \\
Fully Connected 4 & 24                      & elu                 \\
Output            & 1                       & linear              \\ \bottomrule
\end{tabular}}
\end{table}

\begin{table}[t]
\caption{Network structure for evaluating the Q value}\label{nn2}
\centering{
\begin{tabular}{@{}lll@{}}
\toprule
Layers            & Number of neurons       & Activation Function \\ \midrule
CONV1             & (3$\times$3$\times$ 10) & elu                 \\
POOL1             & (2$\times$2)            & elu                 \\
CONV2             & (3$\times$3$\times$10)  & elu                 \\
LSTM              & 50                      & tanh                \\
Fully Connected 1 & 125                     & elu                 \\
Fully Connected 2 & 64                      & elu                 \\
Output            & 1                       & linear              \\ \bottomrule
\end{tabular}}
\end{table}
\begin{figure}[t]
		\centering\hspace*{-0.5cm}
	  	\includegraphics[width=0.45\textwidth]{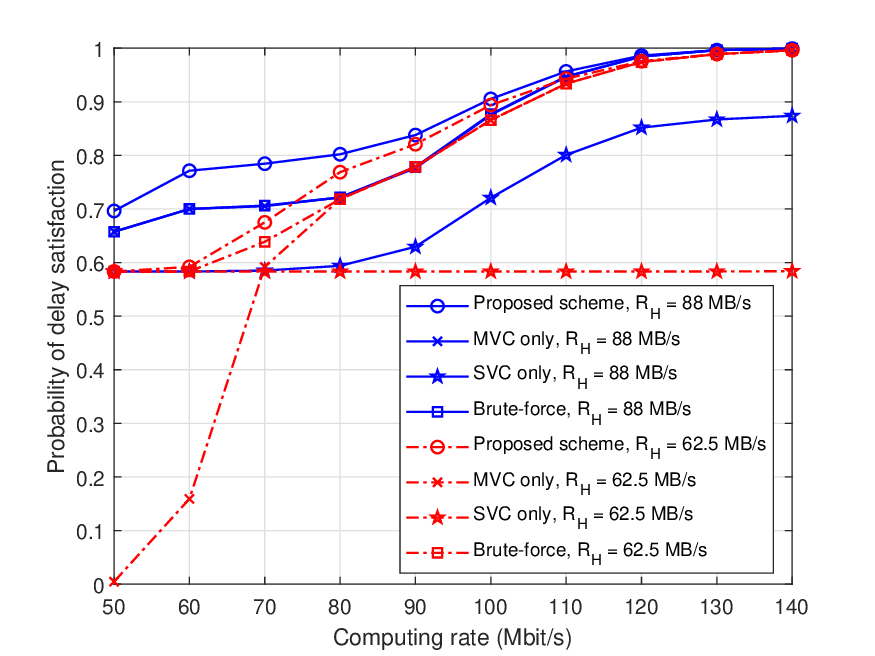}
	  	\caption{The probability of delay requirement satisfaction with various computing rates, {where $R_L = 50$ MB/s}.}\label{sim1}
\end{figure}
\subsection{Performance of the Content Placement Scheme}
We first evaluate the performance of the proposed content placement scheme (i.e., Algs.~\ref{al.caching1}-\ref{al.caching3}) for caching VCs within a VC subset. We compare our proposed scheme with three benchmark schemes: \textit{MVC only} and \textit{SVC only}, in which only MVCs or SVCs are cached, respectively; \textit{cache size search}, which searches the best cache size for caching SVCs, i.e., the best $\omega$, from set \{0, 0.1, \dots, 1\}. The probability of delay requirement satisfaction with various computing rates $\chi$ is shown in Fig.~\ref{sim1}, and the probability of delay requirement satisfaction with various channel transition probabilities, i.e., $p_H$, is shown in Fig.~\ref{sim2}. In comparison with the \textit{MVC-only} and \textit{SVC-only} schemes, the proposed content placement scheme can significantly improve the delay requirement satisfaction probability. This is because the proposed scheme can find the near-optimal caching solution by balancing computing delay and cache resource usage. Furthermore, the proposed scheme outperforms the \textit{cache size search} scheme, which is a computationally intensive scheme based on exhaustive search, {in satisfying the delay requirement}. 
%{The reason is the step size of the \textit{cache size search} scheme is not sufficiently small to find the global optimal caching solution. %, and it can obtain the global optimal caching solution if the step size of $\omega$ is sufficiently small. Thus, 
%The \textit{cache size search} scheme achieves a lower bound of the global optimal performance, which demonstrates that the proposed scheme can achieve near-optimal performance  as well as shown in Proposition~\ref{prop.1}.} 
In addition, in both Figs.~\ref{sim1} and~\ref{sim2}, the performance of the \textit{MVC only} scheme improves significantly as the computing rate increases, whereas the performance improvement of the \textit{SVC only} scheme is not significant, especially when the transmission rate $R_H$ is low. This is because the edge server may frequently download MVCs from the cloud server, and the performance bottleneck in the \textit{SVC only} scheme is on the transmission delay for delivering VCs in both wireless and wired links. When the edge server has sufficient computing resources, caching more MVCs can increase the probability of delay requirement satisfaction. %since more VCs can be cached at the edge server to reduce the transmission delay. %In Fig.~\ref{sim2}, the proposed scheme can also achieve near-optimal performance compared to benchmark schemes. 

\begin{figure}[t]
		\centering \hspace*{-0.5cm}
	  	\includegraphics[width=0.45\textwidth]{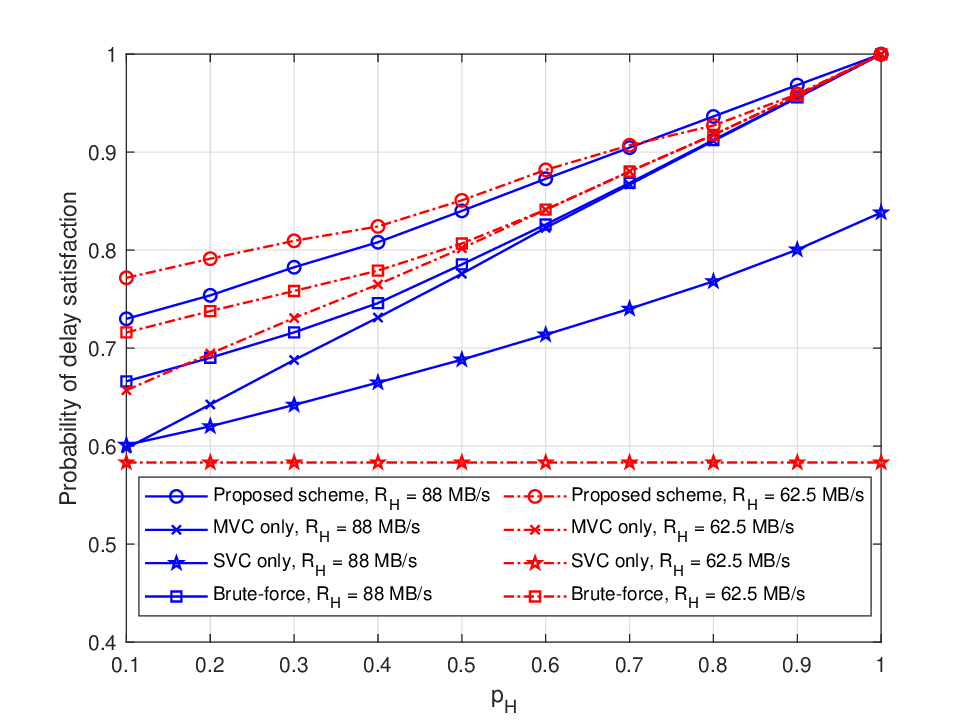}
	  	\caption{The probability of delay requirement satisfaction with various channel transition probabilities, {where $R_L = 50$ MB/s}.}\label{sim2}
\end{figure}

The performance of the parallel content placement scheme (Alg.~\ref{al:full}) for caching VCs associated with multiple time segments is presented in Figs.~\ref{sim3} and~\ref{sim4}, where $R_H = 88$ MB/s and $R_L = 50$ MB/s. Fig.~\ref{sim3} shows the correlation between the popularity of VC subsets, i.e., $\{\sum_{d\in \mathcal{V}_g} p(d), \forall g\}$, and the allocated cache sizes for the subsets, i.e.,  $\{C_g, \forall g\}$, in which all 163 subsets are taken into account. Popular VC subsets are allocated with larger cache sizes than less popular VC subsets. We observe that even if the popularity of different VC subsets is the same, the allocated cache sizes can be different since viewpoint popularity distributions are different among subsets. The correlation between the popularity of VC subsets in two different videos and the allocated cache sizes is shown in Fig.~\ref{sim4}. We use 35 VC subsets for both videos 1 and 3 in Table~\ref{table1} and assign random popularity for the VC subsets. Compared to video 1 (performance video), the popularity and the allocated cache size for a subset in video 3 (film video) are not strongly correlated, especially for the subsets with high popularity. The reason is, compared to video 1, user viewpoints in video 3 are more concentrated.  In such a case, increasing the cache sizes for popular VC subsets may not significantly improve the overall probability of delay requirement satisfaction.
\begin{figure}[t]
		\centering
	  	\includegraphics[width=0.45\textwidth]{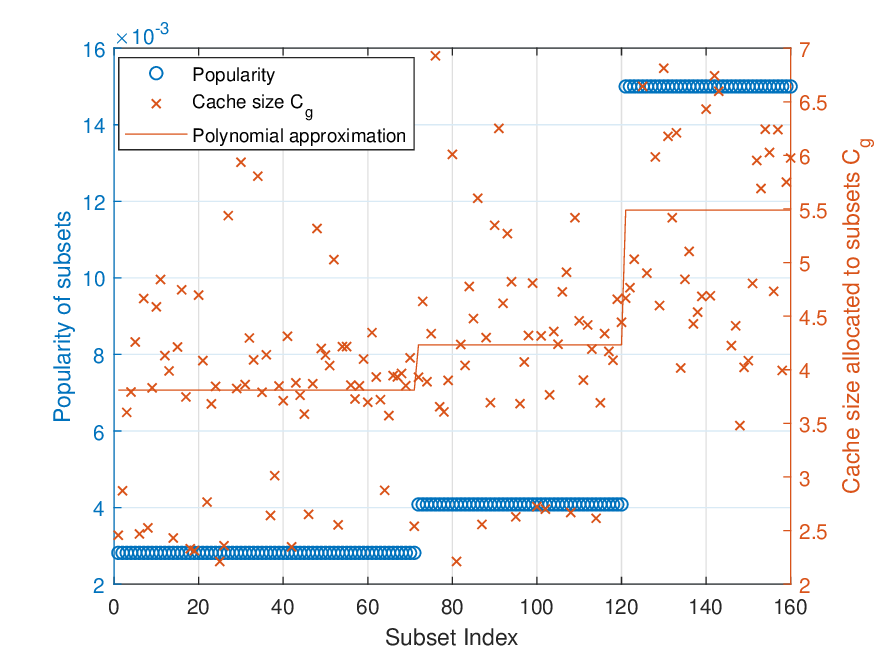}
	  	\caption{The correlation between the popularity of VC subsets, and the allocated cache sizes for VC subsets. }\label{sim3}
\end{figure}

\begin{figure}[t]
		\centering 
	  	\includegraphics[width=0.4\textwidth]{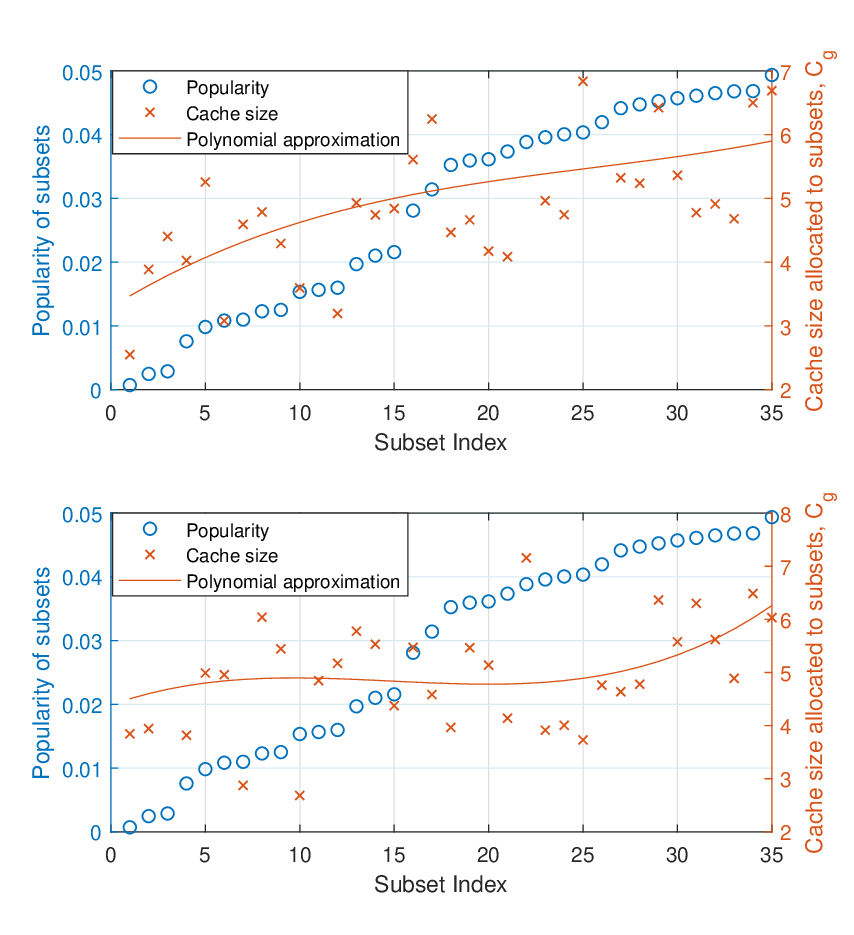}
	  	\caption{The correlation between the popularity of VC subsets in two different videos and the allocated cache sizes for VC subsets. (a) video 1; (b) video 2.}\label{sim4}
\end{figure}

\subsection{Performance of the Content Delivery Scheduling Scheme}
The performance of the proposed WI-based content delivery scheduling scheme is compared with three benchmark schemes: \textit{urgent-request-first (URF)}, in which the edge server always schedules the request for the SVC that will be played in the nearest future first;  \textit{round-robin}, in which the edge server schedules requests in a circular order; and \textit{random}, in which the edge server schedules requests randomly. We generate two different video selection and viewpoint movement profiles based on the viewpoint-tracking data set: one for training the neural networks; another for testing the learning-based scheme performance. For each profile, we generate requests for 2,000 seconds. With the viewpoint movement prediction by LSTM, the performance of the average hit probability for the four schemes is shown in Fig.~\ref{sim5}, in which the test profile is applied. The results in Fig.~\ref{sim5} are the moving average of the hit probability in 1,000 preceding decision epochs (around 8,000 time slots) for each decision epoch. As shown in the figure, the proposed WI-based content delivery scheduling can significantly improve the hit probability. When the transmission rate from the edge server to a headset is low, the proposed WI-based scheme can improve the hit probability by up to 30\% compared to \textit{URF} and \textit{round-robin} schemes. Our proposed WI-based scheme improves performance because, compared to the benchmark schemes, our proposed WI-based scheme evaluates viewpoint and network dynamics in making content delivery scheduling decisions. %Compared to \textit{UPF-C} scheme which schedules requests considering the network dynamics, the proposed scheme can further improve the hit probability by evaluating the long-term rewards based on individual VR headsets' states. 
The relation between video quality and average hit probability is shown in Fig.~\ref{sim6}. As the portion of layer-0 MVCs in an SVC, i.e., $X_0$, decreases, the hit probability decreases accordingly, while the video quality is improved. The proportional relation is consistent with Lemma~\ref{le.x}. When implementing content placement and scheduling schemes, parameter $X_0$ should be adapted to achieve the optimal balance between video quality and hit probability.

\begin{figure}[t]
		\centering
	  	\includegraphics[width=0.45\textwidth]{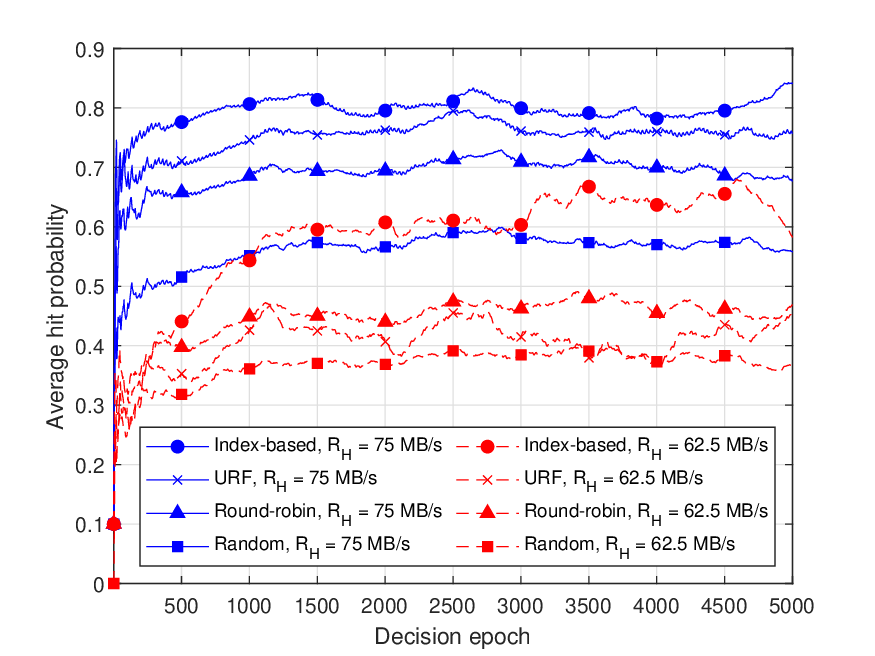}
	  	\caption{The average hit probability with LSTM-based prediction and dynamic channel.}\label{sim5}
\end{figure}

\begin{figure}[t]
		\centering
	  	\includegraphics[width=0.45\textwidth]{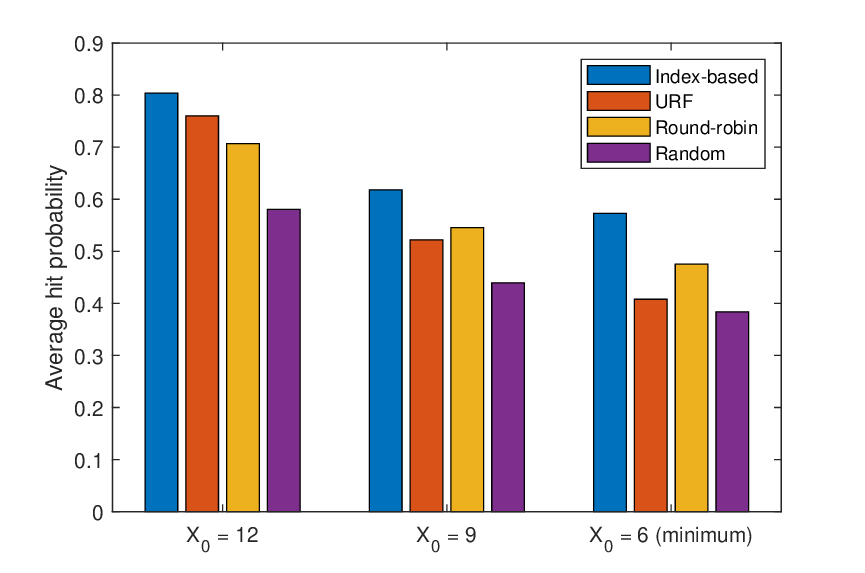}
	  	\caption{The average hit probability versus various video quality settings. }\label{sim6}
\end{figure}
\section{Conclusion}
In this paper, we have studied content caching and delivery schemes for mobile VR video streaming. {Given limited network resources, a scalable content placement scheme has been designed to obtain caching solutions for VR videos by trading off caching and computing resources. In addition, we have developed a novel distributed learning-based content delivery scheduling scheme to coordinate real-time content delivery for multiple VR headsets. The proposed content delivery solution identifies the priority of content delivery among the requests from multiple VR headsets based on the network and user viewpoint dynamics, thereby
maximizing resource utilization.} Simulation results have demonstrated that the proposed solution outperforms benchmark content caching and delivery schemes. In the future, we will investigate efficient computing resource allocation and caching schemes for interactive VR applications, where the dynamics of VR user interactions will be evaluated.

%In this paper, we have studied content caching and delivery schemes for mobile VR video streaming. Specifically, we have proposed a content placement scheme that can obtain caching solutions for VCs with various properties in a scalable manner. We have also developed a novel learning-based content delivery scheduling scheme to coordinate real-time content delivery requests from multiple VR headsets, while adapting to network and viewpoint dynamics in a distributed manner. Simulation results have demonstrated that the proposed solution outperforms benchmark content caching and delivery schemes.  In the future, we will investigate efficient computing resource allocation and caching schemes for interactive VR applications, where the dynamics of VR user interactions will be evaluated. 

\section*{Appendix A: Proof of Lemma 1}

It can be directly proved by using~(9) and the definition of submodularity. Let $\mathcal{S}_1$ and $\mathcal{S}_2$ be the subsets of $\mathcal{S}$ satisfying $\mathcal{S}_1 \subseteq \mathcal{S}_2$. For any $f \in \mathcal{T}_S\backslash {S}_2$, we have
\[L(\mathcal{S}_1\cup \{f\}) -   L(\mathcal{S}_1)  = L(\mathcal{\mathcal{X}}_1)\]
where 
\[\mathcal{X}_1 = \{\cup_{f'\in \mathcal{S}_1 \cup \{f\}}\mathcal{R}(f')\} \bigcap \{\cup_{f'\in \mathcal{S}_1 }\mathcal{R}(f')\}. \]
Similarly, 
\[\mathcal{X}_2 = \{\cup_{f'\in \mathcal{S}_2 \cup \{f\}}\mathcal{R}(f')\} \bigcap \{\cup_{f'\in \mathcal{S}_2 }\mathcal{R}(f')\} \]
and $L(\mathcal{S}_2\cup \{f\}) -   L(\mathcal{S}_2)  = L(\mathcal{\mathcal{X}}_2)$. Since  $\mathcal{S}_1 \subseteq \mathcal{S}_2$, some viewpoints that can be rendered by the VR headset using SVC $f$ and cannot be satisfied by SVCs in set $\mathcal{S}_1$ may be satisfied by SVCs in set $\mathcal{S}_2$. Therefore, $\mathcal{X}_2\subseteq \mathcal{X}_1$, and $L(\mathcal{S}_1\cup \{f\}) -   L(\mathcal{S}_1)  \geq L(\mathcal{S}_2\cup \{f\}) -   L(\mathcal{S}_2)$.
\section*{Appendix B: Proof of Lemma 2}
It can be proved similarly to Lemma 1. Let $\mathcal{M}_1$ and $\mathcal{M}_2$ be the subsets of $\mathcal{M}$ satisfying $\mathcal{M}_1 \subseteq \mathcal{M}_2$, and $L(\mathcal{M}_1\cup \{t\}) -   L(\mathcal{M}_1)  = L({\mathcal{X}}_1)$. In this case, ${\mathcal{X}}_1$ can be defined by
\[\mathcal{X}_1 = \cup_{f'\in \{f|\mathcal{C}(f)\subseteq \{\mathcal{M}_1\cup \{t\}\}, t\in \mathcal{C}(f)\}}\mathcal{R}(f'). \]
Similarly, let  $L(\mathcal{M}_2\cup \{t\}) -   L(\mathcal{M}_2)  = L(\mathcal{\mathcal{X}}_2)$, where
\[\mathcal{X}_2 = \cup_{f'\in \{f|\mathcal{C}(f)\subseteq \{\mathcal{M}_2\cup \{t\}\}, t\in \mathcal{C}(f)\}}\mathcal{R}(f'). \]
Since $\mathcal{M}_1 \subseteq \mathcal{M}_2$, by MVC $t$, more SVCs can be stitched when MVCs in $\mathcal{M}_2$ are cached as compared to when MVCs in $\mathcal{M}_1$ are cached. Therefore, $\mathcal{X}_1 \subseteq \mathcal{X}_2$, and  $L(\mathcal{M}_1\cup \{t\}) -   L(\mathcal{M}_1)  \leq L(\mathcal{M}_2\cup \{t\}) -   L(\mathcal{M}_2)$.

\section*{Appendix C: Proof of Proposition 1}

Let functions $L^M(\omega|x)$ and $L^S(\omega|x)$ denote the probabilities of delay requirement satisfaction by caching MVCs and SVCs with data size $\omega C$, respectively, given that the same type of videos with data size $x$ has been cached. If both $L^M(1-\omega)$ and $L^S(\omega)$ are differentiable, $L(\omega)$ can be differentiated as follows:
\begin{align*}
    \nabla_\omega L(\omega) &= \nabla_\omega L^S(\omega) + \nabla_\omega [L^M(1) - L^M(\omega| 1-\omega)]\\
    &= \nabla_\omega [L^S(\omega) - L^M(\omega| 1-\omega)]\\
    &= \lim_{\Delta \rightarrow 0} \frac{1}{\Delta} \big\{ L^S(\omega + \Delta) - L^S(\omega)+\\
    & \hspace{1cm} L^M(\omega+ \Delta| 1-\omega- \Delta) -  L^M(\omega| 1-\omega)\big\}\\
    &= \lim_{\Delta \rightarrow 0} \frac{1}{\Delta} \big\{ L^S(\Delta|\omega) - L^M( \Delta|1-\omega-\Delta)\big\}.
\end{align*}
Assuming that all SVCs have the same data size,  $L^S(\Delta|\omega)$ monotonically decreases as $\omega$ increases due to submodularity proved in Lemma 1. Moreover, it can be directly proved that $L^S(\Delta|\omega)$ is convex on $\omega$. This is because, when caching an additional SVC, with fewer SVCs cached at the edge server, viewpoints that can be rendered by VR headsets using the additional SVC are less likely to intersect with viewpoints rendered by the cached SVCs, i.e., 
\[L^S(\Delta|\omega+\delta) - L^S(\Delta|\omega) \geq L^S(\Delta|\omega+2\delta) - L^S(\Delta|\omega+\delta).\]
Similarly, it can be shown that function $L^M( \Delta|1-\omega-\Delta)$ monotonically decreases as $\omega$ increases due to supermodularity proved in Lemma 2, and the function is concave on $\omega$. 

If a local optimal $\omega^*$ exists, $\nabla_\omega L(\omega^*) =0$. In such case, $L^S(\Delta|\omega) = L^M( \Delta|1-\omega-\Delta)$ holds for any $\Delta$. Therefore, as shown in Fig. \ref{fig:proof}, there are three possible cases in finding the local optimal $\omega^*$:
\begin{figure}[t]
		\centering
	  	\includegraphics[width=0.35\textwidth]{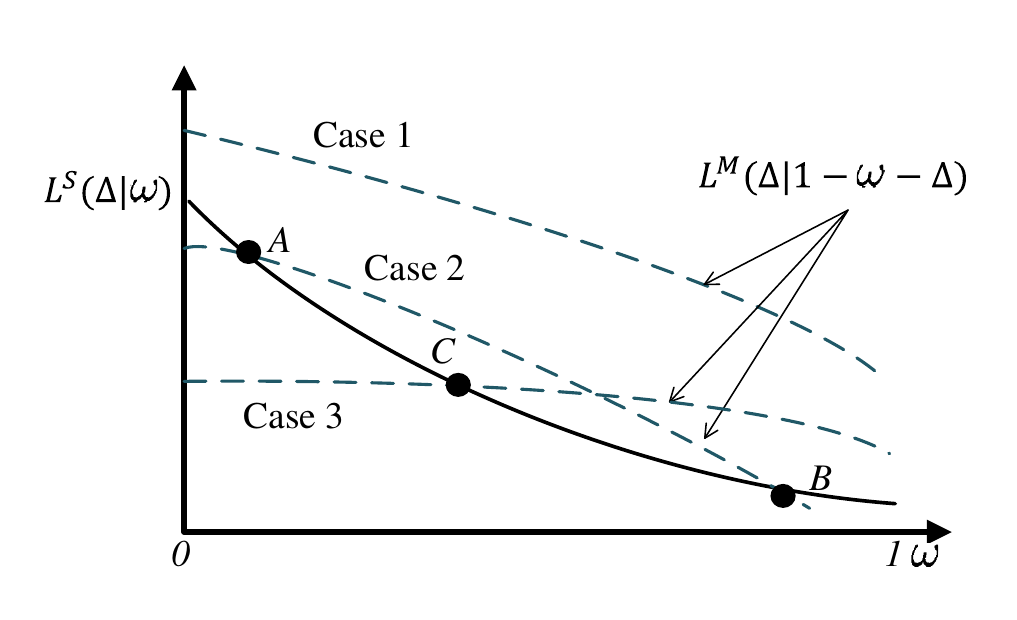}
	  	\caption{Illustration of the proof of Proposition 1.}\label{fig:proof}
\end{figure}
\begin{itemize}
    \item Case 1: No intersection between $L^S(\Delta|\omega)$ and $L^M( \Delta|1-\omega-\Delta)$. Thus, no local optimum between the region $\omega \in (0,1)$. The global maximizer of $L(\omega)$ is located either in $\omega = 0$ or $\omega = 1$.
    \item Case 2: Two intersections between $L^S(\Delta|\omega)$ and $L^M( \Delta|1-\omega-\Delta)$ (i.e., points A and B in Fig. \ref{fig:proof}). Thus, there are two local optimums when $\omega \in (0,1)$, and only one of the local optimums is the local maximizer, denoted by $\omega^*$. If $\omega^*\in (0,1)$ is not the global maximizer, the global maximizer is located at $\omega = 1$. This is because, as $\omega$ increases from zero to the first intersection (i.e., intersection A in Figure \ref{fig:proof}), $\nabla_\omega L(\omega)$ has to be greater than zero. Therefore, the value of $L(\omega)$ at intersection A has to be greater than $L(\omega)$. %If the global optimum is not located in the first intersection, the global optimum has to be located at $\omega = 1$.
    \item Case 3: Only one intersection between $L^S(\Delta|\omega)$ and $L^M( \Delta|1-\omega-\Delta)$ (i.e., point C in Fig. \ref{fig:proof}). There is at most one local maximizer when $\omega \in (0,1)$, and the other local maximizer is either $\omega = 1$ or $\omega = 0$.
\end{itemize}
In summary, based on the three cases, there are mostly two local maximizers of ratio $\omega$ for Problem~(6). %If there are two local maximizers, one of the maximizers is $\omega = 1$. 

\section*{Appendix D: Problem Decomposition for~\eqref{obj:3}}
The first step of parallel content placement is to decouple Problem~\eqref{obj:3} into subproblems for caching the VCs in individual subsets. Introducing auxiliary variables $z_1$ and $\{z_{2,g}, \forall g\}$, we reformulate constraints~\eqref{obj:3_c1} and~\eqref{obj:3_c2} as
\begin{subequations}
\begin{align}
    &\bar{C} - \frac{C_g}{|\mathcal{G}|} - z_1 = 0, \label{obj:3_c11}\\
    &C_g - z_{2, g} = 0, \forall g \label{obj:3_c12}
\end{align}
\end{subequations}
respectively, where $\bar{C}$ represents the mean value of $\{C_g, \forall g\}$. We formulate the augmented Lagrangian of the problem as
\begin{align}
&\mathcal{L}(\Gamma) \!= \!-\sum_{g\in \mathcal{G}}\!\! P_g(C_g) \!+\! u_1(\bar{C}\!-\frac{C_g}{|\mathcal{G}|}\!-z_1) \!+ \!\!\sum_gu_{2,g}(C_g - z_{2,g})\notag \\
&+ \frac{\rho_{1}}{2}({\bar{C}-\frac{C}{|\mathcal{G}|}-z_1})^2 + \frac{\rho_{2}}{2}\sum_g({C_g - z_{2,g}})^2 \label{eq:Lag}
\end{align}
where $u_1$ and $\{ u_{2,g}, \forall g\}$ are the Lagrangian multipliers for Constraints~\eqref{obj:3_c11} and~\eqref{obj:3_c12}, respectively. Parameters $\rho_{1}$ and $\rho_{2}$ are penalty factors. The term $\Gamma$ represents the set of variables, i.e., $\Gamma = \{\{C_g,z_{2,g},u_{2,g}, \forall g\}, z_1,  u_1\}$. We then adopt relaxed heavy ball ADMM~\cite{francca2018nonsmooth} to update the variables in an iterative and parallel manner, where decoupled subproblems for iteration $e+1$ are as follows:
\begin{subequations}
\begin{align}
&C_g^{e+1} \! =\!  \underset{C_g^{e+1}}{\text{argmin}} \{\!-P_g(C_g^{e+1}\!)\!+\! u_1^{e}(\bar{C}^{e}\!-\!C_g^{e}\!+\!C_g^{e+1}\!-\!\frac{C}{|\mathcal{G}|}\!-\!z_1^{e}) \notag\\
& \!+\! u_{2,g}^{e}(C_g^{e+1} \!-\! z_{2,g}) \!+\! \frac{\rho_{1}}{2}(\bar{C}^{e}-C_g^{e}+C_g^{e+1}-\frac{C_g}{|\mathcal{G}|}-z_1^{e})^2 \notag\\
& + \frac{\rho_{2}}{2}(C_g^{e+1} - z_{2,g}^{e})^2\},\label{lag2_}\\
&z_{2,g}^{e+1} = \max \{\varepsilon_1 C_g^{e+1} \!\!+ \!\!(1-\varepsilon_1 )\hat{z}_{2,g}^{e}+\frac{1}{\rho_2}\hat{u}_{2,g}^{e},0\}, \forall g,\label{lag3}\\
&u_1^{e+1} = \hat{u}_1^{e} + \rho_1[\varepsilon_1(\bar{C}^{e+1}-\frac{C}{|\mathcal{G}|}) + (1-\varepsilon_1)\hat{z}_1^{e} - z_1^{e+1}],\\
&u_{2,g}^{e+1} = \hat{u}_{2,g}^{e} +\rho_1[\varepsilon_1 C_g^{e+1} + (1-\varepsilon_1 )\hat{z}_{2,g}^{e}- z_{2,g}^{e+1}], \forall g,\\
&\hat{u}_1^{e+1} = {u}_1^{e+1} + \varepsilon_2({u}_1^{e+1} - \hat{u}_1^{e}),\\
&\hat{u}_{2,g}^{e+1} = {u}_{2,g}^{e+1} + \varepsilon_2({u}_{2,g}^{e+1} - \hat{u}_{2,g}^{e}), \forall g,\\
&\hat{z}_1^{e+1} = {z}_1^{e+1} + \varepsilon_2({z}_1^{e+1} - \hat{z}_1^{e}),\\
&\hat{z}_{2,g}^{e+1} = {z}_{2,g}^{e+1} + \varepsilon_2({z}_{2,g}^{e+1} - \hat{z}_{2,g}^{e}), \forall g \label{lag4}
\end{align}
\end{subequations}
where $0 \leq \varepsilon_1 \leq 1$ and $0 \leq \varepsilon_2 \leq 1$.

\section*{Appendix E: Indexability of the RMAB Problem}
Let $\mathcal{P}(\lambda^*)$ denote the set of state $\mathcal{Y}$ for which the VR headset stays passive, i.e., $a_{u,k} = 0$, according to subsidy $\lambda^*(s_{u,k}), s_{u,k} \in \mathcal{Y}$.
If the RMAB problem is indexable, the following condition should be satisfied:
\begin{itemize}
\item Set $\mathcal{P}(\lambda^*)$ of the corresponding single-armed bandit process increases monotonically from the empty set to the whole state space as subsidy $\lambda^*$ increases from $-\infty$ to $+\infty$~\cite{5605371}.
\end{itemize}
To prove the indexability, we assume that the content delivery delay for all requests is the same, which is denoted by $\phi$.
If the following condition holds, the problem is indexable~\cite{5605371}:
\begin{align}
\frac{d Q(s_{u,k}, 0)}{d \lambda} \geq \frac{d Q(s_{u,k}, 1)}{d \lambda}. \label{eq:index_1}
\end{align}
Define function $B_{\lambda}(s_{u,k}) = \frac{d V(s_{u,k})}{d \lambda} $, and denote the SVC that is scheduled in time slot $k$ by $b$.
The condition in~\eqref{eq:index_1} is equivalent to 
\begin{align}
1+\kappa^{\phi}B_{\lambda}(s_{u,k+\phi}) \geq  \kappa^{\phi}B_{\lambda}(s_{u,k+\phi}\cup \{b\}). \label{eq:index_2}
\end{align}
To prove the above relation, we first assume that $b$ is the last SVC that can be downloaded, i.e., all SVCs have been downloaded in the buffer except $b$. The left-hand side (LHS) of~\eqref{eq:index_2} can be rewritten as
\begin{align}
\textrm{LHS} & = 1\!+\!\kappa^{\phi}\max [1+\kappa^{\phi}B_{\lambda}(s_{u,k+\phi}), \kappa^{\phi}B_{\lambda}(s_{u,k+2\phi} \! \cup \! \{b\})] \notag\\
& \geq 1+\kappa^{2\phi}B_{\lambda}(s_{u,k+2\phi} \cup \{b\}).
\end{align}
The right-hand side (RHS) of~\eqref{eq:index_2} can be rewritten as
\begin{equation}
\textrm{RHS} = \kappa^{\phi}(1+\kappa^{\phi}B_{\lambda}(s_{u,k+2\phi}\cup \{b\})).
\end{equation}
It can be observed that the value of the $\textrm{LHS}$ is greater than the value of the $\textrm{RHS}$ since $\kappa <1$. 

Moreover, we assume that $b$ is the last $h$-th SVCs to be downloaded from the edge until the SVCs are fully downloaded in the buffer, and~\eqref{eq:index_2} holds in such a case. 
In time slot $k'$, where $n'<n$, the last $(h+1)$-th SVC is requested by the VR headset, which is denoted by $b'$. The LHS of~\eqref{eq:index_2} for skipping to download $b'$ in time slot $k'$ can be represented as
\begin{equation}
\textrm{LHS} \!\!= \!\!1+\kappa^{\phi}\max [1+\kappa^{\phi}B_{\lambda}(s_{u,k'+2\phi}), \kappa^{\phi}B_{\lambda}(s_{u,k'+2\phi} \cup \{b^*\})]
\end{equation}
where $b^*$ can be either $b'$ or $b$. Moreover, the RHS of~\eqref{eq:index_2} for downloading $b'$ in time slot $n'$ can be represented as
\begin{equation}
\textrm{RHS} \!\!= \!\!\kappa^{\phi}\!\max [1\!+\kappa^{\phi}\!B_{\lambda}\!(\!s_{u,k'+2\phi}\cup \{b'\}\!),\! \kappa^{\phi}\!B_{\lambda}\!(\!s_{u,k'+2\phi} \cup \{b',\!b\}\!)].
\end{equation}
Since $b$ is the last $h$-th SVCs to be downloaded, the following condition holds:
\begin{equation}
1+\kappa^{\phi}B_{\lambda}(s_{u,k'+2\phi}\cup \{b'\})\geq \kappa^{\phi}B_{\lambda}(s_{u,k'+2\phi} \cup \{b',b\})
\end{equation}
where the RHS is equal to $ \kappa^{\phi}[1+\kappa^{\phi}B_{\lambda}(s_{u,k'+2\phi}\cup \{b'\})]$. If $b^* = b'$, it is straightforward that the $\textrm{LHS}$ is greater than the $\textrm{RHS}$. If $b^* = b$, and the SVC $b'$ is never requested again in the subsequent time slots, state $s_{u,k'+2\phi}\cup \{b'\}$ is equivalent to $s_{u,k'+2\phi}$. The condition that the value of the $\textrm{LHS}$ is not less than the value of $\textrm{RHS}$ still holds. If SVC $b'$ will be requested again in the subsequent time slots, for both state $s_{u,k'+2\phi}\cup \{b\}$ in the LHS and state $s_{u,k'+2\phi}\cup \{b'\}$ in the RHS, there are $h$ SVCs left to be downloaded, and $B_{\lambda}(s_{u,k'+2\phi}\cup \{b\}) = B_{\lambda}(s_{u,k'+2\phi}\cup \{b'\})$. In such a case, the value of the $\textrm{LHS}$ is no less than the value of the $\textrm{RHS}$. Therefore, when the content delivery delay is the same for different requests, the indexability of the problem holds. 

\section*{Appendix F: Proof of Lemma 3}
\begin{figure}[t]
		\centering
	  	\includegraphics[width=0.42\textwidth]{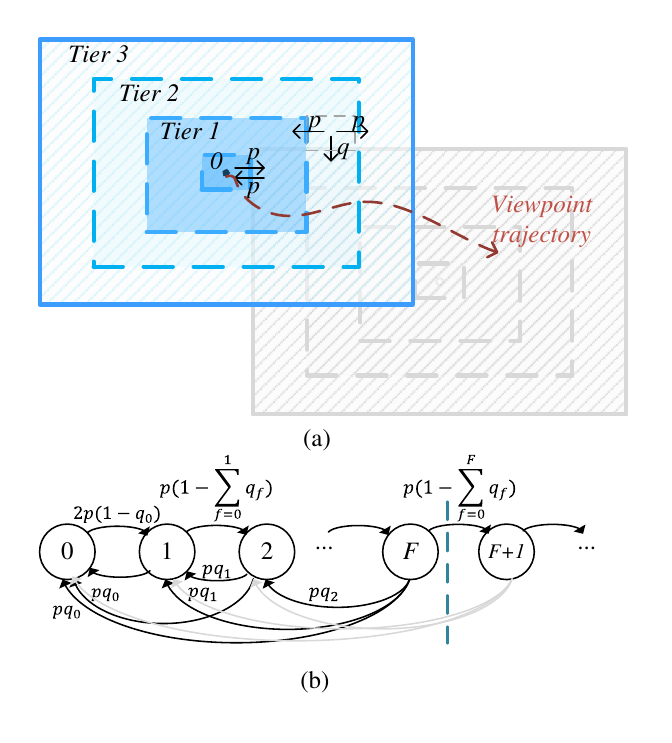}
	  	\caption{(a) Illustration of viewpoint movement, where shadowed areas area are the video spatial area covered by two different videos; (b) Markov chain on viewpoint movement.}\label{fig:markov}
\end{figure}
%The constraint~\eqref{obj.automated.c1} indicates three factors resulting in frame missing event happening. First, the length of request waiting time $W$ is different with varies scheduling schemes applied in the system, which can be measured empirically in real time. If the proposed scheduling scheme is applied, the waiting time is increases as the number of VR headsets in the system increases or the prediction accuracy decreases. Second, the content delivery delay $T$ can be assumed by the content delivery time threshold $H$. Last, t
The distribution of request deadline $N(X_0)$ for a VR headset can be obtained by a Markov chain on viewpoint movement, as shown in Fig. \ref{fig:markov}.
%We focus on the viewpoint movement of a VR headset. As shown in Fig. \ref{fig:markov}(b), 
Denote the probability that the spatial location of a viewpoint moves out of a tile along a direction within a time slot by $p$, as shown in Fig. \ref{fig:markov}(a). The viewpoint can move either up, down, left, or right from the current tile. The user's watching behavior and the video they are watching are factors that influence this probability. Furthermore, denote the probability that the spatial location of a viewpoint moves into the tiles and the viewpoint can be rendered by downloading SVC by $q$, where the probability depends on the number of videos downloaded in the buffer. To simplify the notations, we define index $\zeta$ as a tier in the spatial area of an SVC, as shown in Fig. \ref{fig:markov}(a), and the number of tiles in tier $\zeta$ is denoted by $M(\zeta)$. Then, we define parameter $F$ as the maximal tier of tiles where the located viewpoints can be rendered by an SVC. %The value of $F$ can be represented by $\min_{\zeta}\{\text{rem}(X_0-\hat{I}, M(\zeta')) \leq M(\zeta')\}$. 
Then, we model the viewpoint movement across tires as a Markov chain as shown in Fig. \ref{fig:markov}(b), where state $V(k)$ represents the minimum tier where the viewpoint is located and the tiles are covered by downloaded SVCs in time slot $k$. The transition probabilities between states are defined by
\begin{subequations}
\begin{align}
&\!\!\!\!P(1|0) = 4p(1-q_0);P(0|0) = 1-4p(1-q_0);\\
&\!\!\!\! P(V(k)\!+\!1|V(k))= p(1-\sum_{t = 0}^{\beta}q_{\zeta}), V(k) > 0; \\
&\!\!\!\!P(V(k)\!-\!1|V(k)) \!= \!p(1\!-\!\!\sum_{t = 0}^{\beta-1}q_{\zeta}+4q_{V(k)-1}),\!  V(k) \!> \!0;\\
&\!\!\!\!P(\zeta|V(k)) = 4pq_{\zeta}, 1<\zeta\leq \beta,  V(k) > 0;\\
&\!\!\!\!P(V(k)|V(k))\! =\! 1\!-\!2p(1+\sum_{\zeta = 0}^{\beta-1}q_{\zeta})\!+pq_{V(k)},\! V(k) > 0 \label{eq.trans}
\end{align}
\end{subequations}
where $q_{\zeta} = q M(\zeta)/\sum_{\zeta'=0}^F M(\zeta')$, and $\beta = \min\{V(k),F\}$. The probability mass function of $N(X_0)$ can be obtained from evaluating the number of times that state $V(k)>F$ within $(W+H)$ time slots, given $V(0) = 0$. As shown in~\eqref{eq.trans}, states are more likely to transit to the states with higher values than $F$ when viewpoints move faster, i.e., higher $p$. When $X_0$ increases, $F$ increases since more viewpoints can be rendered by the same SVCs, and $\bar{N}(X_0)$ decreases accordingly.  We can tentatively decrease $X_0$ whenever the frame missing rate is higher than a requirement. %In addition, a higher $X_0$ can be applied to users with higher viewpoint movement dynamics, such as users watching sports videos. 

\bibliographystyle{IEEEtran}
\bibliography{reference}

\end{document}